\setlist[itemize]{noitemsep, topsep=0pt}
\newtheorem{theorem}{Theorem}
\newtheorem{lemma}{Lemma}
\newtheorem{proposition}{Proposition}
\newtheorem{remark}{Remark}
\begin{document}
	%
	\title{In-network Computation for Large-scale Federated Learning over Wireless Edge Networks}
	%
	%
	%
	
	\author{Thinh~Quang~Dinh,  Diep~N.~Nguyen,  Dinh Thai Hoang,  Pham Tran Vu, and Eryk Dutkiewicz 
			\thanks{Preliminary results in this paper are presented at the IEEE GLOBECOM Conference, 2021~\cite{Globecom01}.}
	}
	
	%
	%




	\maketitle
	
	\begin{abstract}
		 Most conventional Federated Learning (FL) models  are using a star network topology where all users aggregate their local models at a single  server (e.g., a cloud server). That causes significant overhead in terms of both communications and computing at the {server}, delaying the training process, especially for large scale FL systems with straggling nodes. This paper proposes a novel edge network architecture that {enables} decentralizing the model aggregation process at the server, thereby significantly reducing the training delay for the whole FL network. Specifically, we design a highly-effective in-network computation framework (INC) consisting of {a user scheduling mechanism, an in-network aggregation process (INA) which is designed for both primal- and primal-dual methods in distributed machine learning problems, and a network routing algorithm with theoretical performance bounds}. The in-network aggregation process, which is implemented at edge nodes and cloud node, can adapt two typical methods to allow edge networks to effectively solve  the distributed machine learning problems. Under the proposed INA, we then formulate a joint routing and resource optimization problem, aiming to minimize the aggregation latency. The problem turns out to be  {NP-hard, and thus we propose} a  {polynomial time} routing algorithm   which can achieve near optimal performance with a theoretical bound. Simulation results {showed} that the proposed algorithm  {can achieve} more than 99$\%$ of the optimal solution and {reduce} the FL training latency, up to $5.6$ times w.r.t other baselines. The proposed INC framework can not only help reduce the FL training latency but also significantly decrease cloud's traffic and computing overhead. By embedding the computing/aggregation tasks at the edge nodes and leveraging the multi-layer edge-network architecture, the INC framework can liberate FL from the star topology to enable large-scale FL.
		\end{abstract}
	\begin{IEEEkeywords}
		Mobile Edge Computing, Federated Learning, In-network Computation, Large-scale Distributed Learning
	\end{IEEEkeywords}

	%
	\IEEEpeerreviewmaketitle

	\section{Introduction} \label{sec:intro}
	
	Machine learning that enables intelligent systems {has} emerged as a key technology  benefiting many aspects of modern society \cite{SaadNetMag2020}. Currently, most data-driven systems collect  {data from local devices and then train the data} at centralized cloud servers. 
	However, user privacy, applications' latency  {and network overhead traffic} are major concerns of this centralized solution.
	To address these issues, collaborative learning schemes  {have} received considerable attention where mobile users (MUs) can build and share their machine learning models instead of sending their {raw data} to a centralized data center \cite{LimCOMST2020}. Out of these distributed learning schemes, Federated Learning (FL) has recently emerged as a promising candidate \cite{McMahan2017}. In  FL, each MU iteratively computes  {its} local model using  {its} local training data.  {This local model} is then  {sent to a cloud server and aggregated together with other local models}, contributed by other MUs to produce a global model.  {The global model is then sent back to all the MUs and then these MUs will use it to update their new local models accordingly.}  This  process is  repeated   {until it is converged or}  an  accuracy  level  of  the  learning  model  is reached. Since no user's raw data is exchanged, FL facilitates machine learning in many aspects from data storage, training to data acquisition, and privacy preservation.
	
	Although communications overhead is significantly saved by sharing local models instead of local raw data, communication cost is still a major bottleneck of FL. This is because to aggregate the global model, the centralized server generally needs to connect with a huge number of users for all their updates via star network topology \cite{Wiedemann2020,lin2018deep,TianyiNIPS2018,McMahan2017,XiaTWC2020,YangTC2020,ChenICC2020}. Moreover, for advanced deep learning models \cite{Bert2019,Simonyan15} which contain a vast amount of parameters, the size of models exchanged in networks could be very large. For example, the BERT model \cite{Bert2019} in Natural Language Processing area is up to $1.3$ GB, while VGG16 \cite{Simonyan15} in Computer Vision is more than $500$ MB.  {Thus, with} VGG16, it can cost about $500$ TB for each user until the global model is converged \cite{SamekTut2020}. {Consequently}, high communication cost can lead to (a) high transmission latency and (b) high traffic at the cloud server. To address the issues, new network architectures are needed be investigated.

	High computation cost is another challenge  of conventional FL models using the star network topology. Let us assume that vectors or matrices are used to store users' models for each FL iteration. With VGG16, vectors or matrices contain more than $138$ million elements. As a result, when the number of users grows, aggregation operations at the centralized server can be very computationally costly \cite{Qian2012,Eric2017} in terms of processing and memory resources. 
	For that, novel network topologies should be considered to enable large-scale FL systems.

	Edge Computing (EC), an emerging distributed network architecture \cite{LimCOMST2020} that aims to  {bring cloud functions and resources (including computing, storage and networking capacities) closer to end users} can be a promising solution to large scale FL. Since edge nodes (EN) possess both computation and communication capacities, edge  {networks} can  {effectively support} the cloud server to decentralize its communication and computing burden of model aggregation in very large scale FL networks. 
	To exploit potential benefits of edge networks for  {FL models}, it is necessary to develop a distributed in-network aggregation (INA) functionality that can be implemented at edge nodes of EC to liberate FL from the conventional start network topology.

	In-network computation (INC) is   a   process   of gathering and routing information through a multi-hop network, then processing data at intermediate nodes with the objective of optimizing the resource consumption \cite{FasoloWC2007}. Currently, the concept is well-studied for big data analytics such as MapReduce \cite{DeanMapReduce}, Pregel \cite{GrzegorzPregel} and
	DryadLINQ \cite{YuanDryadLINQ} for distributed data clusters. Three  basic components of an in-network computation solution are: suitable networking protocols, effective aggregation functions and  efficient  ways of representing data \cite{FasoloWC2007}. However, INC frameworks for FL is under-studied. Moreover, edge nodes will be densely deployed in future networks \cite{ChenJSAC2018}, one can exploit this diversity by considering a scenario where
	 {an} MU can associate with multiple nearby edge nodes, instead of only a single edge node. Under such a scenario, the problems of network routing and network resource allocation become more challenging.

Regarding machine learning models, e.g., classification or regression, they can be  {learned} by solving regularized loss minimization problems in two ways: (a) by using primal methods such as Stochastic Gradient Descent (SGD) \cite{McMahan2017} or (2) by using the primal-dual methods such as Stochastic Dual Coordinate Ascent (SDCA) \cite{Ma2017,NIPS2014_894b77f8}.  Most works in FL are limited in one method \cite{Wiedemann2020,lin2018deep,TianyiNIPS2018,McMahan2017,XiaTWC2020,YangTC2020,ChenICC2020,YangTWC2020,CaoTWC2020,TranINFOCOM2019,WangJSAC2019,LiuICC2020}. {From experimental and theoretical results  \cite{YangTC2020,Shwartz2013}, it is noted that the primal methods cost less resources, e.g. computing and storage, and can faster reach a moderate accuracy level, while the primal-dual method can provide better convergence and accuracy in the long run.
As a result,} to achieve high accuracy with faster convergence, instead of using vanilla SDCA, authors in \cite{Shwartz2013} implemented SGD in initial rounds before executing SDCA. Therefore, a general in-network computation protocol, which can adapt to  {both the FL} methods, is required.
	
	
Given the above, this paper proposes a novel edge network architecture and its associated in-network computation framework that {enable} decentralizing the model aggregation process at the server, thereby significantly reducing the training delay for the whole FL network. To this end, a highly-effective in-network computation protocol is proposed with three main components:  {a user scheduling mechanism, a novel in-network aggregation process which is designed for both primal- and primal-dual methods in distributed machine learning problems, and a network routing algorithm with theoretical performance bounds}. 
	The major contributions of this work are summarized as follows.
	\begin{enumerate}
		\item Propose a novel edge network architecture to  decentralize the communication and computing burden of cloud node in FL aiming at minimizing the whole network latency. Such a network architecture is enabled by a novel in-network computation protocol consisting of an effective in-network aggregation process, a routing algorithm and a user scheduling.

		\item Design a new user scheduling mechanism which allows a group of users to be able to send their models in advance instead of waiting for the last user finishing its local update. Based on the power law distribution of user data, this mechanism could mitigate straggler effect with a theoretical performance bound.
		
		\item  Develop a novel in-network aggregation process that instructs on how data are processed at edge nodes and cloud node to decentralize the global model aggregation process. The INA is designed for both primal- and dual-primal solutions to a given FL network.
		
		\item  Optimize the joint routing and resource allocation for MUs and ENs under the proposed architecture, thereby minimizing FL training time of the whole network. We show that the resulting mixed integer non-linear programming problem is NP-hard then propose an effective solution based on randomized rounding techniques. Simulations show that the proposed solution can achieve more than 99$\%$ of the optimal solution.		
		
	\end{enumerate}
	
		The rest of the paper is organized as follows. In Section \ref{section:related_work}, we present the related works. The system
		model is introduced in Section \ref{section:sys_model}. Then, the user scheduling mechanism is proposed in Section \ref{sec:user_shed}. Section \ref{sec:ina} presents our in-network aggregation process.  Next, we formulate the network routing and resource allocation frameworks in Section \ref{section:problem_for}, and propose our solution in Section \ref{section:solution}.   We then present the numerical results in Section \ref{section:numerical} and final conclusions are drawn in Section \ref{section:conclusion}.
	
	\section{Related Works} \label{section:related_work}
	
	Although sharing raw local data is not required, high communication cost still remains a major obstacle in FL systems, especially given its canonical star network topology. Solutions to this issue can be divided into several directions such as compressing the local models using quantization, e.g., 
		\cite{Wiedemann2020,lin2018deep}, skipping unnecessary gradient calculations or global updates \cite{TianyiNIPS2018,McMahan2017}, and selecting a promising user subset of each global update \cite{XiaTWC2020,YangTC2020,ChenICC2020}. For  {compressing local models},  MPEG-7 part 17 \cite{Wiedemann2020} has become a universal neural network compression standard, which is not only limited in the domain of multimedia. However, among these works \cite{Wiedemann2020,lin2018deep,TianyiNIPS2018,McMahan2017,XiaTWC2020,YangTC2020,ChenICC2020}, authors usually assumed that MUs can be directly connected to a single server. This star network topology assumption is impractical since it does not reflect the hierarchical structure of large-scale wireless networks. Moreover, for a large number of users, such a star topology cannot help scale up the FL system. 
		
		Edge computing is a potential infrastructure which can help to address the high communication cost of FL. Even though there are many works where  {edge networks are complementary to the cloud} \cite{JiaoTON2017Smoothed,PoularakisTON2020,DinhTWC2020, 8647856}, the capacities of edge networks are under explored in  existing FL works \cite{Wiedemann2020,lin2018deep,TianyiNIPS2018,McMahan2017,XiaTWC2020,YangTC2020,ChenICC2020,YangTWC2020,CaoTWC2020,TranINFOCOM2019,WangJSAC2019}. To decentralize and redistribute the model aggregation process of FL at cloud server to edge nodes in edge networks, this article leverages the ``in-network computing" concept. Although such a concept has been well-studied for traditional machine learning paradigms where users' raw data are collected and stored at big data clusters \cite{DeanMapReduce,GrzegorzPregel,YuanDryadLINQ}, INC for distributed learning paradigms, e.g., FL, especially with the aid of edge computing, has not  been visited. 
		
		There have been early works proposing decentralizing aggregation processes for different network architectures such hierarchical topology \cite{LiuICC2020,HosseinalipourTON2022}, ring topology \cite{LeeAAAI2021}, and random graph \cite{Bellet2021}. However, in those works, the proposed decentralized aggregation processes can be considered as variants of either FedAvg or CoCoA.  In our paper, the proposed in-network aggregation process, a component of our INC solution, can adapt to both primal and primal-dual methods. Moreover, existing works mostly focused on designing aggregation processes while network resource management and routing problems were not jointly considered. For example, in \cite{LiuICC2020}, users are assumed to connect with a single edge node without any alternative links. In reality with dense edge networks, MUs can associate to more than one edge nodes. Such a practical scenario calls for optimal network routing and resource allocation solution, aiming at minimizing the system latency. 
		In this article, under the proposed INC, we formulate a joint routing and resource optimization problem, aiming to minimize the aggregation latency. The problem turns out to be  {NP-hard, and thus we design} a  {polynomial time} routing algorithm   which can achieve near optimal performance with a proven theoretical bound. Last but not least, to address slow workers that stagnate the learning systems in these existing works \cite{Wiedemann2020,lin2018deep,TianyiNIPS2018,McMahan2017,XiaTWC2020,YangTC2020,ChenICC2020,YangTWC2020,CaoTWC2020,TranINFOCOM2019,WangJSAC2019,LiuICC2020, Yuris1, Yuris2} (known as the straggler effect), we propose a user scheduling scheme that is the third component of our INC solution. In short, the key novelty of our work is to leverage edge networks to enable large-scale FL and liberate FL from its canonical start topology. All aforementioned works that aim to decentralize the FL process did not leverage edge networks nor address the associated challenges (e.g., routing, resource allocation, and the straggling effect).		
	
	\section{System Model} \label{section:sys_model}
	\begin{table*}
		\caption{Notation Used Throughout the Paper} \label{notation}
		\begin{center}
			\renewcommand{\arraystretch}{1.1}
				\begin{tabular}{p{0.1\linewidth}p{0.35\linewidth}p{0.1\linewidth}p{0.35\linewidth}}
					
					\hline
					{\bf Notation} & {\bf Definition} &{\bf Notation} & {\bf Definition}  \\ \hline
					$k$ & index of MU & $D$ & the data size of the global model \\	
					$\mathcal{K}$ & set of users & $W^d$ & the downlink data rate capacity \\
					$\mathcal{D}_k$ & local dataset of user $k$ & $T^d$ & the downlink latency for broadcasting global model\\
					$n_k$ & number of data points of user $k$ & $a_{km}$ & the aggregation routing variable \\
					$n$ & total number of data points of all users & $r_{km}$ & the uplink data rate allocation variable\\
					$i$ & index of data point & $B^{\mathrm{fr}}_m, B^{\mathrm{bk}}_m$ &  the uplink fronthaul and backhaul data rate capacity of edge node $m$	\\	
					$m$ & index of EN & $T^{\mathrm{u,\rm{fr}}}_m$ & the uplink fronthaul latency of edge node $m$ and its users \\
					$\mathcal{M}$ & set of ENs & $\gamma_m$ & the transmission latency between edge node $m$ and the cloud node \\
					$t$ & index of time & $T^{\mathrm{u}}_m$ & the uplink latency of users associated with edge node $m$ \\
					$\mathbf{w}^t$ & the global model parameter of FL in primal at learning round $t$ & $T^{\mathrm{u}}$ & the total uplink latency of the whole network\\
					$\mathbf{w}_k^t$ &  the local model parameter of FL at user $k$ in primal at learning round $t$ & $t_{k}^{\rm{cp}}$ & local training update at MU $k$ \\
					$\mathbf{v}^t$ & the global model parameter of FL in primal-dual formulation at learning round $t$ & $T$ & the training time in one learning round\\
					$\Delta \mathbf{v}_k^t$  &  the local model parameter of FL at user $k$ in primal-dual formulation at learning round $t$ & $s$ & user scheduling scheme\\
					$\psi$ & the generalized global model of $\mathbf{w}$ and $\mathbf{v}$ & $\mathcal{P}_1,\mathcal{P}_2$ & user partition $1$ and $2$\\
					$\bm{\alpha}$ & the dual variables of FL in in primal-dual formulation & $j$ & index of user partition \\
					$P(\mathbf{w})$ & the global loss function & $z$ & the indicator if FL system is using primal-dual method\\
					$l_i$ & the loss function at data point $i$ & $\bm{\phi}_k^{t}$ & the local message that user $k$ sends in learning round $t$\\
					 $r$ & a deterministic penalty function & $\bm{\varphi}_{m,j}^t$ & the message that EN $m$ sends to cloud for user partition $\mathcal{P}_j$ at learning round $t$\\
					$\xi$ & the regularizing parameter & $\bm{\lambda}^t_j $ & the weight and the parameters of the aggregated model for user partition $\mathcal{P}_j$ at learning round $t$\\
					$G(\bm{\alpha} )$ & the Fenchel-Rockafeller dual form of the global loss function $P(\mathbf{w})$ & & \\
					
					\hline
				\end{tabular}

		\end{center} 
	\end{table*}%
	Let us consider a set of $K$  mobile users {MUs}, denoted by
	$\mathcal{K} =
	\{1,\ldots,K\}$ with local datasets $\mathcal{D}_k  = \{ \mathbf{x}_i \in \mathbb{R}^d, y_i \}_{i=1}^{n_k}$ with $n_k$ data points. These MUs participate in a distributed learning process by learning a shared global model from and sharing their local models (without sharing their raw data) with a cloud server. These MUs are co-located and supported by an MEC network consisting of a set of $M$ edge nodes (ENs), denoted by $\mathcal{M} = \{1, \ldots ,M\}$. The edge nodes, controlled by the MEC operator, can be small cell base stations with their own communications and computing capacities \cite{PoularakisTON2020}. ENs are connected with a cloud server via a macro base station. Without loss of generality, we can consider the cloud server as a special EN, i.e., EN $0$, co-located with the macro base station. MUs can connect to the cloud server either through ENs or directly with the macro base station. Each MU can be associated with one or more ENs \cite{Yuris3}.  {For} example, in Fig. \ref{fig::mqms}, MU $4$, which lies in the overlapping coverage areas of EN $1$ and EN $2$, can upload its local model to either one of these ENs.
	
	\begin{figure}[t]
		\centering
		{\includegraphics[width=\linewidth]{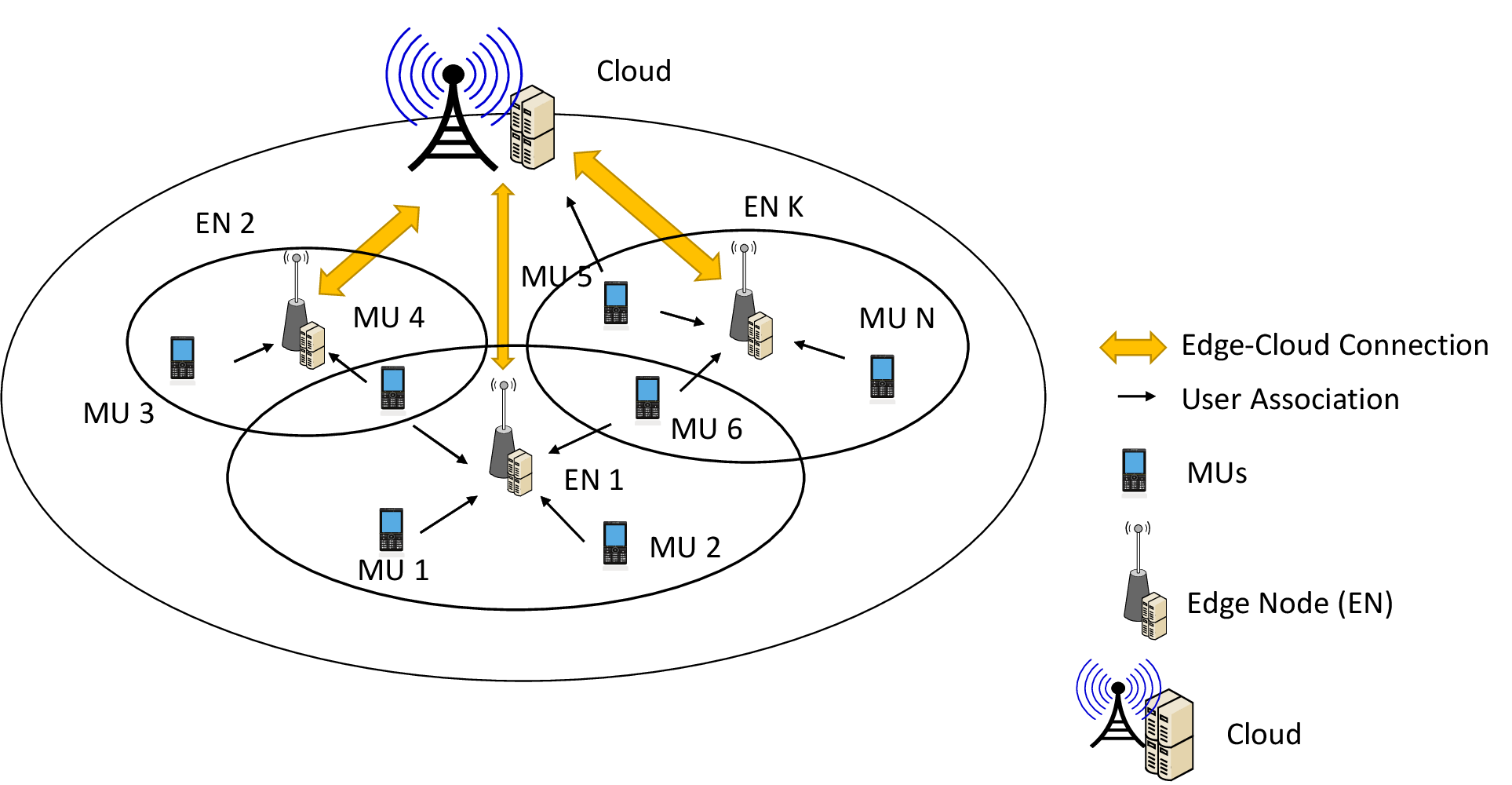}}
		\caption{FL-enabled Edge Computing Network Architecture. \label{fig::mqms} }
	\end{figure}
		
	\subsection{Federated Learning}
	To construct and share the global model, the goal is to find the model parameter $\mathbf{w} \in \mathbb{R}^d$ which minimizes the following global loss function in a distributed manner where data is locally stored at MUs:
		\begin{align}
		\min_{\mathbf{w} \in \mathbb{R}^d} \Bigg\{ P(\mathbf{w}) = \frac{1}{n} \sum_{i=1}^{n} l_i (\mathbf{x}_i^T\mathbf{w})  + \xi r(\mathbf{w}) \Bigg\},  \label{eqn:ML_primal_prob}
		\end{align}
		where $n = \sum_{i=1}^{K} n_k $, $\xi$ is the regularizing parameter and $r(\mathbf{w})$ is a deterministic penalty function \cite{zhang2010}. Here, $l_i$ is the loss function at data sample $i$.  To solve the problem (\ref{eqn:ML_primal_prob}), one can either implement the (a) primal only method or the (b) primal-dual method. For example, FedAvg is a variant  of the former one \cite{McMahan2017}, while CoCoA is a variant of the latter \cite{NIPS2014_894b77f8}. Since both FedAvg and CoCoA are widely adopted, e.g., \cite{McMahan2017,NIPS2014_894b77f8,XiaTWC2020,YangTC2020,ChenICC2020,YangTWC2020,CaoTWC2020,TranINFOCOM2019,WangJSAC2019}, in  this paper, we consider both of them in our design.
	
		For FedAvg \cite{McMahan2017}, at each iteration $t$, the cloud broadcasts $\mathbf{w}^t$ to all MUs. Based on the latest global model, each MU 
		{learns}  {its} local parameters $\mathbf{w}_k^t$ according  {to the} Stochastic Gradient Descent update
		rule aiming  at minimizing the objective function $P(\mathbf{w})$ by only using local information and the parameter values
		in $\mathbf{w}_k^t$ \cite{McMahan2017}:
		\begin{align}
		\mathbf{w}_k^t = \mathbf{w}_k^{t} - \eta (\nabla l_i(\mathbf{w}_k^{t} 
		) + \nabla r(\mathbf{w}^{t})). \label{eqn:SGD}
		\end{align}
		Here, $\nabla$ stands for partial derivative computation.
	The resulting local model updates are forwarded to the cloud for computing the new global model as follows:
		\begin{align}
		\mathbf{w}^{t+1}  = 
		\frac{1}{n} \sum_{k=1}^{K} n_k \mathbf{w}_k^t.
		\end{align}	
		
		For CoCoA, the goal is to find the model parameter $\mathbf{v} \in \mathbb{R}^d$ which is related to model $\mathbf{w}$ via the dual relationship. Following \cite{Ma2017}, using the Fenchel-Rockafeller duality, we rewrite the local dual optimization problem of (\ref{eqn:ML_primal_prob}) as follows:
		\begin{align}
		\max_{\bm{\alpha} \in \mathbb{R}^{n}}\Bigg \{ G(\bm{\alpha} ) = - \sum_{i=1}^{n}  \frac{l^*_i(-\alpha_i)}{n}  - 
		\xi r^*  (\frac{1}{\xi n} \mathbf{X}\bm{\alpha})\Bigg \}, \label{eqn:ML_dual_prob}	 
		\end{align}
		where $\{\alpha_i \}_i^n \in \mathbb{R}$ represents the set of the dual variables, $\mathbf{X} = [\mathbf{x}_1, \mathbf{x}_2, \ldots, \mathbf{x}_n] \in \mathbb{R}^{d \times	 n} $ is the total data set, $l^*_i(\cdot)$ and
		$r^*(\cdot)$ are the convex conjugate functions of $l_i(\cdot)$ and $r(\cdot)$.
		Following \cite{Ma2017}, $l_i(\cdot)$  {is} assumed to be convex with $1/\mu$-smoothness, $r(\cdot)$ is assumed to be 1-strongly convex. 
		At each iteration $t$, the cloud broadcasts $\mathbf{v}^t$ to all MUs. Based on Stochastic Dual Coordinated Ascent method, each MU  {updates} a mapping of dual variables $\Delta \mathbf{v}_k^t$ as follows:
		\begin{align}
		\Delta \mathbf{v}_k^t = \frac{1}{\xi n}\mathbf{X}_{[k]} \mathbf{h}_{[k]}^t, \label{eqn:SDDA}
		\end{align}	
		where $\mathbf{X}_{[k]}$ is denoted for
		the matrix consisting  {of only} the columns corresponding to data samples $i \in \mathcal{D}_k$, padded with zeros in all other columns, $\mathbf{h}_{[k]}^t$ is denoted for the iterative solution obtained by solving the approximated problem of (\ref{eqn:ML_dual_prob}) using only data samples $i \in \mathcal{D}_k$, which is defined in \cite{Ma2017}. 	The new global model is then aggregated at the cloud as follows:
		\begin{align}
		 \mathbf{v}^{t+1} = \mathbf{v}^{t} + \frac{1}{K} \sum_{k=1}^{K} \Delta \mathbf{v}_k^t.
		\end{align}
	
	We summarize the general procedure of these two FL methods as follows:
		\begin{figure*}[t]
			\begin{equation}
			\bm{\psi}^{t+1}  = \begin{cases}
			\frac{1}{n} \sum_{k=1}^{K} n_k \mathbf{w}_k^t & \text{if we choose the primal method}, \\
			\bm{\psi}^{t}  + \frac{1}{K} \sum_{k=1}^{K} \Delta \mathbf{v}_k^t & \text{if we choose the primal-dual method}. 
			\end{cases} \label{eqn:global_model}
			\end{equation}
		\end{figure*}	
	\begin{itemize}
		\item[1] \emph{Global Model Broadcasting}: The cloud broadcasts the latest global model to the MUs, either $\mathbf{w}^t$ or $\mathbf{v}^t$.
		\item[2] \emph{Local Model Updating}: Each MU performs local training following either (\ref{eqn:SGD}) or (\ref{eqn:SDDA}). 
		\item[3] \emph{Global Model Aggregation}: Local models are then sent back to the cloud.  Let $\psi^{t}$ denote the generalized global model. The new value of global model is then computed at the cloud by following (\ref{eqn:global_model}).
		\item[4.] Steps 1-3 are repeated until convergence.
	\end{itemize}

	\subsection{Communication Model}
	We then introduce the communication model for multi-user
	access.  For each FL iteration, the cloud node will select a set of users $\mathcal{K}^t$ at each iteration $t$. Here, how to  {select the best MUs at each learning round} is out of the scope of this paper \footnote{The learner selection in FL can be based on the quality or significance of information or location learners \cite{XiaTWC2020,YangTC2020,ChenICC2020}.}.  All users consent about their models' structure, such as a specific neural network design. Hence, let $D$ denote the data size of model parameters: 
	\begin{align}
	D = (d +1) \times \mathrm{Codeword~length},
	\end{align} 
	where $d$ is defined as the number of parameters mentioned  {as the length of $\mathbf{w}_k^t$ or $\Delta \mathbf{v}_k^t$}. 
	\subsubsection{Global Model Broadcasting}	
	Since the downlink data rate capacity $W^d$ of the  {cloud node} is much larger than that of an edge node, all users  {will} listen to the  {cloud node} at the model broadcasting step.  The latency for broadcasting the global model is $T^{\mathrm{d}} = \frac{D}{W^d}$.
	\subsubsection{Global Model Aggregation}	
	
	Let $a_{km} $ be the aggregation routing variable, where
	\begin{align}
	a_{km}  = \begin{cases}
	1 & \text{if MU $k$'s local model is directly sent to edge}\\
	& \text{node $m$}, \forall m \in \mathcal{M},\\
	0 & \text{otherwise}. 
	\end{cases}
	\end{align}
	Let $\mathbf{a}_{m}   =  [a_{1m} , a_{2m} , \cdots, a_{Km}  ]^T$  {denote} the uplink association vector of edge node $m$. We denote $\mathbf{A}  = \{a_{km} \}  \in \{0,1 \}^{K\times(M+1)}$  to be the uplink routing matrix, and $\tilde{\mathbf{a}}  = [\mathbf{a}_{0}^T,\mathbf{a}_{1}^T, \ldots , \mathbf{a}_{M}^T]^T$ to be the column vector corresponding to $\mathbf{A} $. Let $\mathcal{K}_m^t$ denote the set of user associated with edge node $m$  {at iteration $t$},  {then we have} $\bigcup_m ~ \mathcal{K}_m^t = \mathcal{K}^t$, and $|\mathcal{K}^t| = \sum_{m}|\mathcal{K}_m^t|$, where $| \cdot |$  denotes the cardinality of a set.  {Here, $\mathcal{K}_m^t$ may change over $t$ to adapt with the change of wireless channels.}

	Let $r_{km}$ denote the uplink data rate between user $k$ and edge node $m$. For ease of exposition, let $r_{k0}$ denote the uplink data rate between the cloud node and user $k$. 	Let $\mathbf{r}_{m}   =  [r_{1m} , r_{2m} , \ldots, r_{Km}  ]^T$ denote the uplink bandwidth allocation vector corresponding to edge node $m$, with $m=0$ for cloud node. 	We denote $\mathbf{R}  = r_{km} \in \mathbb{R}^{K\times(M+1)}$ as the uplink bandwidth allocation
	matrix.
	
	Let $B^{\mathrm{fr}}_m$, and $B^{\mathrm{bk}}_m$ denote the uplink fronthaul and backhaul data rate capacity of edge node $m$.  Then, uplink communication latency between edge node $m$ and  {its associated users} is the longest latency of a given user:
	\begin{align}
	T^{\mathrm{u,\rm{fr}}}_m & = \max_{k \in \mathcal{K}_m} \Bigg\{D\frac{a_{km}}{r_{km}} \Bigg \}, ~\textrm{where}~ r_{km} \leq B^{\mathrm{fr}}_m, \forall m \in \mathcal{M} \setminus \{0\}.
	\end{align}
	After edge nodes receive local models, there are two methods considered in this paper. Each edge node can help the cloud node to aggregate the local model, then send the aggregated result to the cloud node.  {Alternatively,} edge  nodes just forward received models to the cloud. Let $\gamma_m$ denote the transmission latency between edge node $m$ and the cloud node. Without in-network computation protocols, $\gamma_m$ is computed as followed
	\begin{align}
	\gamma_m = \frac{D\sum_{k \in \mathcal{K}_m }  a_{km}}{B^{\mathrm{bk}}_m}. \label{eqn:no_mec_edge_cloud_lat}
	\end{align}
	The uplink latency of users associated with edge node $m$ is
	\begin{align}
	T^{\mathrm{u}}_m  & = \max_{k \in \mathcal{K}_m}  {\Big\{ T^{\mathrm{u,\rm{fr}}}_m \Big \}} + \gamma_m.
	\end{align}	 

	For users associated with the cloud node, let $W^u$ denote the uplink communication capacity of the cloud node.  {Thus,} the uplink latency of these users is the longest latency of a given user
	\begin{align}
	T^{\mathrm{u}}_0 & = \max_{k \in \mathcal{K}_0} \Bigg\{D\frac{a_{k0} }{r_{k0} } \Bigg \}, \forall k \in \mathcal{K}_0,~\textrm{where}~r_{k0}  \leq W^u.
	\end{align}
	Hence, the total uplink latency of the whole network is
	\begin{align}
	T^{\mathrm{u}} & = \max \Bigg\{ \max_m \Big\{  {T^{\mathrm{u}}_m} \Big\},  {T^{\mathrm{u}}_0} \Bigg\} .
	\end{align}
 {	\begin{remark}
		Selecting the primal method or the primal-dual method does not impact the  amount of data sent by each user per iteration. The reason is that instead of sending $\mathbf{w}_k^t \in \mathbb{R}^d$, each user sends $\Delta \mathbf{v}_k^t \in \mathbb{R}^d$ having the same size $d$. Thus, in the scenarios where the system chooses primal method in initial iterations and primal-dual method later as in \cite{Shwartz2013} the computation of  the total uplink latency remains unchanged.
	\end{remark}}

	\subsection{Local Processing Model}
	Let $c_k$ (cycles/sample) be the number of processing cycles
	of MU $k$ to execute one sample of data, which assumes to be
	measured offline and known a prior \cite{TranINFOCOM2019}. Denoting the central
	processing unit (CPU) frequency of MU $k$ by $f_k$ (cycles/s), the
	computation time for the local training update at MU $k$ over $L$ local iterations is given by
	\begin{align}
	t_{k}^{\rm{cp}} = L_k \frac{c_k n_k}{f_k}.
	\end{align}
	Here, $L_k$ depends on the number of training passes that each client makes over its local dataset on each round, number of local data samples, and the local minibatch size \cite{McMahan2017}. Note that $t_{k}^{\rm{cp}}$ can be estimated by each MU before joining FL. Thus, the MEC knows $t_{k}^{\rm{cp}}$ as a prior. Since data generated by mobile users {usually follow the} power law \cite{Muchnik2013,BildTIT2015}, we assume that $t_{k}^{\rm{cp}}$  {also follows the} power law. Let $t_{k}^{\rm{cp}} $ be  {lower bounded and upper bounded by $t_{\min}^{\rm{cp}}$ and $ t_{\max}^{\rm{cp}}$, respectively.} The probability density of computing time is
		\begin{align}
			p(t^{\rm{cp}}) = \frac{\beta - 1}{t_{\min}^{\rm{cp}}} \Bigg( \frac{t^{\rm{cp}}}{t_{\min}^{\rm{cp}}} \Bigg)^{-\beta}.
		\end{align}
		With several real datasets \cite{BildTIT2015}, it is observed that $\beta  \in [1.47, 2.46]$. Here, we set $\beta$ at $1.6$, $t_{\min}^{\rm{cp}}$ at $0.2$. We also limit the maximum value of $t^{\rm{cp}}$ at $80$s.

		\section{In-Network Computation User Scheduling} \label{sec:user_shed}
		To minimize the  {training time} of the FL systems, we aim to minimize the latency that includes both up/down-link communication and computing latency for each FL iteration. Specifically, the latency per iteration depends on how we schedule the MUs to upload their updates, how the local model/updates from MUs are routed to the server, and how allocate communications resources for each up-link.  {Let $s \in \mathbb{S}$ denote a user scheduling scheme and $\mathbb{S}$ denote the set of all feasible scheduling schemes. The training time minimization problem can be formulated as follows.
		\begin{align}
			\min_{s,\mathbf{A},\mathbf{R}} T(s,\mathbf{A},\mathbf{R}),
		\end{align}
		where $T(s,\mathbf{A},\mathbf{R})$ is the training time of one iteration.} 
		\begin{figure}[t]
			\centering
			{\includegraphics[width=\linewidth]{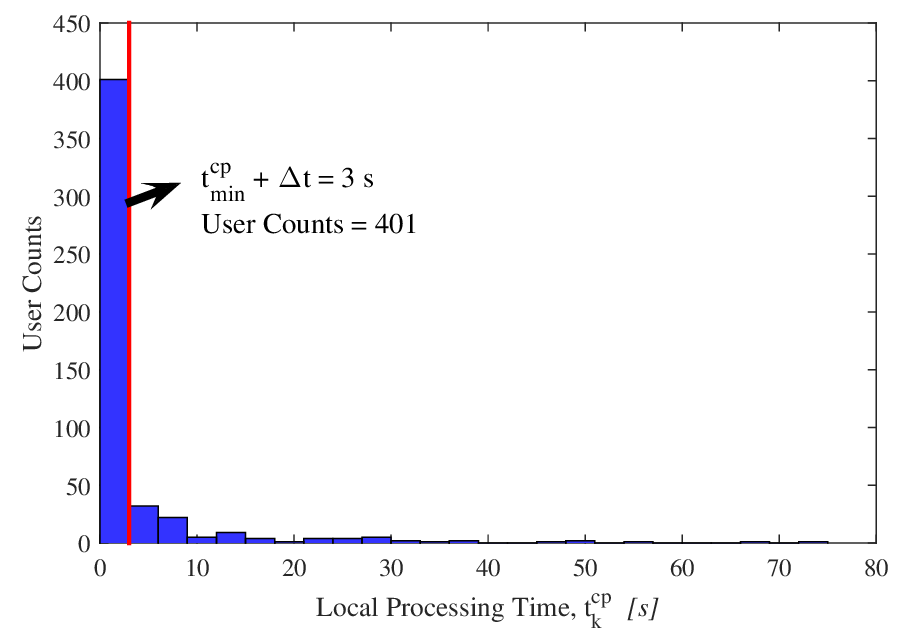}}
			\caption{An example of the distribution of $t_{k}^{\rm{cp}}$, where $K = 500$, $t_{\min}^{\rm{cp}} = 0.2$s, $t_{\max}^{\rm{cp}}  = 80$s and $\beta = 1.6$. \label{fig:power_law} }
		\end{figure}		
		For the widely accepted conventional scheduling scheme, denoted by $s_0$, the network operator  {begins} aggregating users' models after all users  {complete} their local update step \cite{TranINFOCOM2019}.  {Its training time} for one FL iteration is computed as follows :
		\begin{align}
			T(s_0,\mathbf{A},\mathbf{R}) =  T^{\mathrm{d}} + \max_k \{t_{k}^{\rm{cp}}  \} + T^{\mathrm{u}}(\mathbf{A},\mathbf{R}).
		\end{align} 
		This scheduling may suffer serious straggler effect  {due to} slow MUs. We observe that  {with different distributions of $t_{k}^{\rm{cp}}$, we can use different suitable user scheduling schemes. Let us consider two extreme cases where $t_{k}^{\rm{cp}}$ is either densely or dispersedly distributed. We apply two corresponding scheduling schemes with the upper bounds of the training time.}
		\begin{remark}
			If the local computing time of users is  {densely distributed}, i.e., $ t_{\max}^{\rm{cp}} - t_{\min}^{\rm{cp}} \leq \epsilon_0 T^{\mathrm{u}}(\mathbf{A},\mathbf{R}), \epsilon_0 \in (0,1) $, the network operator could wait for all users finishing updating local model, i.e., the $s_0$ scheduling scheme. The  {training time in one iteration} is then bounded by
			\begin{align}
			T(s_0,\mathbf{A},\mathbf{R})   \leq T^{\mathrm{d}} + t_{\min}^{\rm{cp}} + (1 + \epsilon_0) T^{\mathrm{u}}(\mathbf{A},\mathbf{R}).
			\end{align}
		\end{remark}
		
		We observe that if the difference of local processing time between the slowest and the fastest client is not significant, i.e., $ t_{\max}^{\rm{cp}} - t_{\min}^{\rm{cp}} \leq \epsilon_0 T^{\mathrm{u}}(\mathbf{A},\mathbf{R}), \epsilon_0 \in (0,1) $, the user scheduling does not highly impact the training time in one iteration. For example, given $100$ users and the aggregation time of collecting all models, $T^{\mathrm{u}}$, is $100$ seconds, if $80$ users  finish local processing at the same time after $79$ seconds and $20$ users finish local processing at the same time after $80$ seconds, 
		the training time in one iteration cannot be sooner than $T^{\mathrm{d}} + 79 + T^{\mathrm{u}}(\mathbf{A},\mathbf{R})$, but can not later than $T^{\mathrm{d}} + 80 + T^{\mathrm{u}}(\mathbf{A},\mathbf{R})$. Since the upper-bound of the difference of aggregation time between the earliest case and the slowest case is $1$ second which is very small compared with the aggregation time. In this case, it does not impact much on the training time.
		\begin{remark}
			Let us rank the local processing time $t_{k}^{\rm{cp}}$ in an increasing order. If the local processing time of users is  {dispersedly distributed}, i.e., $\min |t_{q}^{\rm{cp}} - t_{q+1}^{\rm{cp}} | \geq \frac{D}{W^u} , \forall q$, where $q$ is the index of the rank, the centralized/cloud server can collect users' models one by one. The  {training time in one iteration} is then
			\begin{align}
				T(s_1,\mathbf{A},\mathbf{R}) = T^{\mathrm{d}} + t_{\max}^{\rm{cp}} + \frac{D}{W^u}.
			\end{align}
		\end{remark}

		 {In the sequel,} we focus on addressing the scenario where $ t_{\max}^{\rm{cp}} - t_{\min}^{\rm{cp}} > \epsilon_0 T^{\mathrm{u}}(\mathbf{A},\mathbf{R}) $ and $\exists k,k', |t_{k}^{\rm{cp}} - t_{k'}^{\rm{cp}} | < \frac{D}{W^u}  $. As an illustrative example, Fig. \ref{fig:power_law} shows a distribution of $t_{k}^{\rm{cp}}$ of $500$ users following  {the power} law distribution, where $t_{\min}^{\rm{cp}}$ and $t_{\max}^{\rm{cp}}$ are $0.2$s and $80$s, respectively. As  can be  observed,  {within a small amount of time $\Delta t \ll t_{\max}^{\rm{cp}}$} during $[t_{\min}^{\rm{cp}}, t_{\min}^{\rm{cp}} + \Delta t]$, where $t_{\min}^{\rm{cp}} + \Delta t = 3$s, most users ($401$ out of $500$) finish their local update. Meanwhile, for the ResNet152's model size $D = 232$MB \cite{KerasPretrain}, the minimum uplink latency if all users directly send their model to the cloud via its uplink wireless channel is $T^{\mathrm{u}} = \frac{KD}{W^u} = 464$s. Since $t_{\min}^{\rm{cp}} + \Delta t = 3$s is much smaller than $T^{\mathrm{u}}$, while $t_{\max}^{\rm{cp}}$ is comparable with $T^{\mathrm{u}}$, it significantly stagnates the FL system if the server only starts the model aggregation after the last user completes its local update, i.e., using the scheduling $s_0$. Given this observation, we propose an in-network user scheduling mechanism to mitigate the straggler effect in FL systems in the next section. 
		
		\subsection{Bipartition User Scheduling Scheme}
		Our proposed bipartition user scheduling is summarized in Algorithm \ref{algorithm:user_scheduling}. Specifically, let $ {\mathcal{P}_1^t} = \{k \in \mathcal{K}^t | t_{k}^{\rm{cp}} \in  [t_{\min}^{\rm{cp}}, t_{\min}^{\rm{cp}} + \Delta t] \}$ be the user partition where $t_{k}^{\rm{cp}} \in  [t_{\min}^{\rm{cp}}, t_{\min}^{\rm{cp}} + \Delta t]$ and $ {\mathcal{P}_2^t} = \mathcal{K}^t \setminus \mathcal{P}_1^t$ denote the user partition of the rest users,  {at iteration $t$}. Let $T^{\mathrm{u}}(\mathbf{A}_{\mathcal{P}_j^t },\mathbf{R}_{\mathcal{P}_j^t },\mathcal{P}_j^t )$ denote the uplink latency which is returned by a network routing algorithm by aggregating of all users' models of $\mathcal{P}_j^t$ under the corresponding uplink routing and bandwidth allocation matrices $\mathbf{A}_{\mathcal{P}_j^t },\mathbf{R}_{\mathcal{P}_j^t}$, for $j \in \{1,2\}$.  {At each iteration $t$}, we first begin aggregating
		the global model of $\mathcal{P}_1^t$ at time $t_{\min}^{\rm{cp}} + \Delta t$. The time required for these users in $\mathcal{P}_1^t$ to complete the aggregation process is $t_{\mathcal{P}_1^t} = T^{\mathrm{d}} + t_{\min}^{\rm{cp}} + \Delta t + T^{\mathrm{u}}(\mathbf{A}_{\mathcal{P}_1^t},\mathbf{R}_{\mathcal{P}_1^t},\mathcal{P}_1^t)$. If $t_{\mathcal{P}_1^t} > T^{\mathrm{d}} + t_{\max}^{\rm{cp}}$, we aggregate $\mathcal{P}_2^t$. Otherwise, the rest users will wait until the slowest user finishing local processing, i.e., $T^{\mathrm{d}} + t_{\max}^{\rm{cp}}$.  
		
		\begin{algorithm}    [t]                
			\caption{Bipartition User Scheduling Scheme}          
			\label{algorithm:user_scheduling}                           
			{
				\begin{algorithmic}[1]                    
					\Require $\mathcal{K}^t$, $\{t_{k}^{\rm{cp}}, \forall k \in \mathcal{K}^t\}$, $\Delta t$ and a network routing algorithm.
					\Ensure $T'$
					\State Partition users into $\mathcal{P}_1 = \{k \in \mathcal{K}^t | t_{k}^{\rm{cp}} \in  [t_{\min}^{\rm{cp}} + \Delta t] \}$ and $\mathcal{P}_2 = \mathcal{K}^t \setminus \mathcal{P}_1$.
					\State Begin aggregating $\mathcal{P}_1$ at time $t_{\min}^{\rm{cp}} + \Delta t$. Then, compute $t_{\mathcal{P}_1} = t_{\min}^{\rm{cp}} + \Delta t + T^{\mathrm{u}}(\mathbf{A}_{\mathcal{P}_1},\mathbf{R}_{\mathcal{P}_1},\mathcal{P}_1)$. 
					\If  {$t_{\mathcal{P}_1} >T^{\mathrm{d}} + t_{\max}^{\rm{cp}}$}
					\State Begin aggregating $\mathcal{P}_2$ at the time $t_{\mathcal{P}_1}$.
					\Else
					\State Begin aggregating $\mathcal{P}_2$ at the time $T^{\mathrm{d}} + t_{\max}^{\rm{cp}}$.
					\EndIf						
				\end{algorithmic}}
			\end{algorithm}
			
			\subsection{System's Latency Analysis}
			We now analyze the  {training time} of the FL system in one iteration resulting from the proposed user scheduling scheme. This bipartition scheduling scheme's whole network latency of one FL iteration is computed by
	\begin{figure*}
	\begin{equation}
	\{ \phi_k^t[0],\bm{\phi}_k^t[1]  \} = \begin{cases}
	\{ n_k, 	\mathbf{w}_k^t \},& \text{if we choose the primal method}, \\
	\{ 1 , \Delta \mathbf{v}_k^t\}, & \text{if we choose the primal-dual method}. 
	\end{cases} \label{eqn:user_packet}
	\end{equation}
	\end{figure*}			
			\begin{align}
			 {T(s_b,\mathbf{A},\mathbf{R})} = T^{\mathrm{d}} + \max \{t_{\mathcal{P}_1^t}, t_{\max}^{\rm{cp}}\} +T^{\mathrm{u}}(\mathbf{A}_{\mathcal{P}_2^t},\mathbf{R}_{\mathcal{P}_2^t},\mathcal{P}_2^t),
			\end{align}
			 {where $s_b$ is the proposed bipartition user scheduling scheme.} 
			\begin{theorem}
				If there exist $\epsilon_1, \epsilon_2, \epsilon_3 \in (0,1)$ and $\epsilon_0 > \epsilon_2+ \epsilon_3$ that $|\mathcal{P}_2|  \leq \epsilon_1 K$, $T^{\mathrm{u}}(\mathbf{A}_{\mathcal{P}_2^t},\mathbf{R}_{\mathcal{P}_2^t},\mathcal{P}_2^t) \leq \epsilon_2 T^{\mathrm{u}}(\mathbf{A},\mathbf{R})$ and $\Delta t  \leq \epsilon_3 T^{\mathrm{u}}(\mathbf{A},\mathbf{R})$, $T(s_b,\mathbf{A},\mathbf{R})$ is upper bounded by
				\begin{align}
			\nonumber		 {T(s_b,\mathbf{A},\mathbf{R})} \leq & T^{\mathrm{d}}+ \max \Big\{  t_{\min}^{\rm{cp}}  + (1 +\epsilon_2 +\epsilon_3 ) T^{\mathrm{u}}(\mathbf{A},\mathbf{R}), \\
					 & t_{\max}^{\rm{cp}} + \epsilon_2 T^{\mathrm{u}}(\mathbf{A},\mathbf{R})\Big\}   <  {T(s_0,\mathbf{A},\mathbf{R})}.
				\end{align}
			\end{theorem}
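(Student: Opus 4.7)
The plan is to work directly from the closed-form expression $T(s_b,\mathbf{A},\mathbf{R}) = T^{\mathrm{d}} + \max\{t_{\mathcal{P}_1^t}, t_{\max}^{\mathrm{cp}}\} + T^{\mathrm{u}}(\mathbf{A}_{\mathcal{P}_2^t},\mathbf{R}_{\mathcal{P}_2^t},\mathcal{P}_2^t)$ stated just above the theorem, expand $t_{\mathcal{P}_1^t} = t_{\min}^{\mathrm{cp}} + \Delta t + T^{\mathrm{u}}(\mathbf{A}_{\mathcal{P}_1^t},\mathbf{R}_{\mathcal{P}_1^t},\mathcal{P}_1^t)$, and substitute each hypothesis in turn. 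Writing $T^{\mathrm{u}}$ as shorthand for $T^{\mathrm{u}}(\mathbf{A},\mathbf{R})$, I would first replace the $\mathcal{P}_2$-uplink by $\epsilon_2 T^{\mathrm{u}}$, then bound $\Delta t \leq \epsilon_3 T^{\mathrm{u}}$, and finally replace $T^{\mathrm{u}}(\mathbf{A}_{\mathcal{P}_1^t},\mathbf{R}_{\mathcal{P}_1^t},\mathcal{P}_1^t)$ by $T^{\mathrm{u}}$ via a monotonicity observation (addressed below). This chain yields $t_{\mathcal{P}_1^t} \leq t_{\min}^{\mathrm{cp}} + (1+\epsilon_3)T^{\mathrm{u}}$, after which pushing the outside additive $\epsilon_2 T^{\mathrm{u}}$ inside the maximum produces exactly the claimed upper bound $T^{\mathrm{d}} + \max\{t_{\min}^{\mathrm{cp}} + (1+\epsilon_2+\epsilon_3)T^{\mathrm{u}},\, t_{\max}^{\mathrm{cp}} + \epsilon_2 T^{\mathrm{u}}\}$.

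For the strict inequality against $T(s_0,\mathbf{A},\mathbf{R}) = T^{\mathrm{d}} + t_{\max}^{\mathrm{cp}} + T^{\mathrm{u}}$, I would compare the two branches of the max separately with $t_{\max}^{\mathrm{cp}} + T^{\mathrm{u}}$. The branch $t_{\max}^{\mathrm{cp}} + \epsilon_2 T^{\mathrm{u}}$ is strictly smaller because $\epsilon_2 \in (0,1)$. The branch $t_{\min}^{\mathrm{cp}} + (1+\epsilon_2+\epsilon_3)T^{\mathrm{u}}$ is strictly smaller iff $(\epsilon_2+\epsilon_3)T^{\mathrm{u}} < t_{\max}^{\mathrm{cp}} - t_{\min}^{\mathrm{cp}}$, which follows by chaining the regime condition $t_{\max}^{\mathrm{cp}} - t_{\min}^{\mathrm{cp}} > \epsilon_0 T^{\mathrm{u}}$ (the dispersed-computing-time case that motivates Algorithm~\ref{algorithm:user_scheduling}) with the hypothesis $\epsilon_0 > \epsilon_2 + \epsilon_3$. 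Taking the maximum over the two branches then gives the desired strict inequality.

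The main obstacle is the implicit monotonicity step $T^{\mathrm{u}}(\mathbf{A}_{\mathcal{P}_1^t},\mathbf{R}_{\mathcal{P}_1^t},\mathcal{P}_1^t) \leq T^{\mathrm{u}}(\mathbf{A},\mathbf{R})$, which is not stated among the hypotheses. I would justify it by noting that $\mathcal{P}_1^t \subseteq \mathcal{K}^t$, and that under the inherited routing and rate allocation each per-edge fronthaul latency $T^{\mathrm{u,fr}}_m$ is a maximum over users and each backhaul term $\gamma_m$ is an additive sum over users, both of which are monotonically non-increasing when users are removed; the overall uplink latency, being a further max of these quantities, is therefore monotone in the user set. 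I would also flag that the hypothesis $|\mathcal{P}_2| \leq \epsilon_1 K$ does not appear quantitatively in the final bound; it should be read as a qualitative regime description (most users finish local updates within the small window $[t_{\min}^{\mathrm{cp}}, t_{\min}^{\mathrm{cp}} + \Delta t]$) that justifies the plausibility of the $\epsilon_2$-bound on the $\mathcal{P}_2$ uplink for the power-law distribution of $t_k^{\mathrm{cp}}$ introduced in Section~\ref{section:sys_model}.
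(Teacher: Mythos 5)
Your proof is correct and follows essentially the same route as the paper's: a two-branch comparison of $\max\{t_{\mathcal{P}_1^t},\, T^{\mathrm{d}}+t_{\max}^{\mathrm{cp}}\}$ against $T(s_0,\mathbf{A},\mathbf{R})$, bounding each piece with the $\epsilon_2$ and $\epsilon_3$ hypotheses. You are in fact more explicit than the paper on exactly the two points you flag --- the monotonicity $T^{\mathrm{u}}(\mathbf{A}_{\mathcal{P}_1^t},\mathbf{R}_{\mathcal{P}_1^t},\mathcal{P}_1^t)\leq T^{\mathrm{u}}(\mathbf{A},\mathbf{R})$ and the need for the regime condition $t_{\max}^{\mathrm{cp}}-t_{\min}^{\mathrm{cp}}>\epsilon_0 T^{\mathrm{u}}(\mathbf{A},\mathbf{R})$ combined with $\epsilon_0>\epsilon_2+\epsilon_3$ to get the strict inequality in the first branch --- both of which the paper uses silently.
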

			\begin{proof}
				If $t_{\mathcal{P}_1^t} > T^{\mathrm{d}} + t_{\max}^{\rm{cp}}$, 
				\begin{align}
					\nonumber	 {T(s_b,\mathbf{A},\mathbf{R})} & =  T^{\mathrm{d}} + t_{\min}^{\rm{cp}} + \Delta t + T^{\mathrm{u}} (\mathbf{A}_{\mathcal{P}_1^t},\mathbf{R}_{\mathcal{P}_1^t},\mathcal{P}_1^t) \\
	\nonumber				& ~~~+ T^{\mathrm{u}} (\mathbf{A}_{\mathcal{P}_2^t},\mathbf{R}_{\mathcal{P}_2^t},\mathcal{P}_2^t)\\
	\nonumber				&  \leq T^{\mathrm{d}} + t_{\min}^{\rm{cp}}  + (1 +\epsilon_2  +\epsilon_3) T^{\mathrm{u}} (\mathbf{A},\mathbf{R}) \\
	& < T^{\mathrm{d}}+ t_{\max}^{\rm{cp}} + T^{\mathrm{u}} (\mathbf{A},\mathbf{R}) =   {T(s_0,\mathbf{A},\mathbf{R})}. \label{eqn:temp_a}
				\end{align}
				Otherwise, 
				\begin{align}
				\nonumber	 {T(s_b,\mathbf{A},\mathbf{R})} & = T^{\mathrm{d}} + t_{\max}^{\rm{cp}} + T^{\mathrm{u}} (\mathbf{A}_{\mathcal{P}_2^t},\mathbf{R}_{\mathcal{P}_2^t},\mathcal{P}_2^t) \\
\nonumber				& \leq T^{\mathrm{d}}+ t_{\max}^{\rm{cp}} + \epsilon_2  T^{\mathrm{u}} (\mathbf{A},\mathbf{R}) \\
					&  < T^{\mathrm{d}} + t_{\max}^{\rm{cp}} + T^{\mathrm{u}} (\mathbf{A},\mathbf{R}) =   {T(s_0,\mathbf{A},\mathbf{R})}. \label{eqn:temp_b}
				\end{align}		
				From (\ref{eqn:temp_a}) and (\ref{eqn:temp_b}), the upper bound of the whole network latency of proposed user scheduling $T'$ is achieved which is always smaller than that of the conventional user scheduling $T$.
			\end{proof}

{Let us study a simple example where there is a star network consisting of $K = 500$ users and a single cloud node with $W^u = W^d = 2$Gbps. We assume that ResNet152's model is considered. Reusing the distribution in Fig. \ref{fig:power_law}, the aggregation time without the proposed user scheduling is $T(s_0,\mathbf{A},\mathbf{R}) = D/W^d + t_{\max}^{\rm{cp}} + KD/W^u = 0.928 + 80 + 464 = 544.928$s. With the proposed user scheduling, $t_{\min}^{\rm{cp}} + \Delta t = 3$s and $|\mathcal{P}_1| = 401$ users, and $t_{\mathcal{P}_1} = 0.928 + 3 + 372.128 = 376.056$s. Since $t_{\mathcal{P}_1} > T^{\mathrm{d}} + t_{\max}^{\rm{cp}} = 0.928 + 80 = 80.928$s,  the aggregation time is $T(s_b,\mathbf{A},\mathbf{R}) = T^{\mathrm{d}} + t_{\min}^{\rm{cp}} + \Delta t + T^{\mathrm{u}} (\mathbf{A}_{\mathcal{P}_1^t},\mathbf{R}_{\mathcal{P}_1^t},\mathcal{P}_1^t) + T^{\mathrm{u}} (\mathbf{A}_{\mathcal{P}_2^t},\mathbf{R}_{\mathcal{P}_2^t},\mathcal{P}_2^t) = t_{\mathcal{P}_1} + (K - |\mathcal{P}_1|)D/W^u = 376.056 + 91.872 = 467.928$s. Thus, $T(s_b,\mathbf{A},\mathbf{R}) < T(s_0,\mathbf{A},\mathbf{R})$. In another case, when $K = 50$ and $|\mathcal{P}_1| = 40$ users, $T(s_0,\mathbf{A},\mathbf{R}) = D/W^d + t_{\max}^{\rm{cp}} + KD/W^u = 0.928 + 80 + 46.4 = 127.328$s and $t_{\mathcal{P}_1} = 0.928 + 3 + 37.2128 = 41.1408$s. Here $t_{\mathcal{P}_1} < T^{\mathrm{d}} + t_{\max}^{\rm{cp}}$,  the aggregation time is thus $T(s_b,\mathbf{A},\mathbf{R}) = T^{\mathrm{d}} + t_{\max}^{\rm{cp}} + T^{\mathrm{u}} (\mathbf{A}_{\mathcal{P}_2^t},\mathbf{R}_{\mathcal{P}_2^t},\mathcal{P}_2^t) = 0.928 + 80 + 9.28 = 90.208$s. It is also smaller than $T(s_0,\mathbf{A},\mathbf{R})$.}
	\section{In-Network Aggregation Design}	\label{sec:ina}	
	
	We now introduce the in-network  {aggregation design} that allows edge nodes to support the server for addictive weighting users' local models.  {Back to early $2000$s, the concept of in-network computation was well-studied for wireless sensor networks (WSNs), e.g., \cite{FasoloWC2007} due to sensors' limited communications, computing, storage capabilities. The core idea of in-network computation is to design data structures to better represent the information collected/generated at each sensor for each specific application \cite{FasoloWC2007}. Analogously, under FL, mobile users also do not transmit their raw data to the server. We then can interpret their local models as their data representation. To obtain the global model, defined in (\ref{eqn:global_model}), we first design the user packet which plays the role of data representation in a in-network computation solution.} This packet design  can be tailored to adapt  {with} two FL schemes. Specifically, let $\bm{\phi}_k^{t} = \{\phi_k^t[0],\bm{\phi}_k^t[1]  \}$ denote the local message of users $k$ at iteration $t$ which are generated by following (\ref{eqn:user_packet}).
	\subsection{Aggregation Function}
	In this section, we propose a general aggregation function which is suitable for  {the two methods solving FL problems}. If we use  {the} primal method, we observe that
	\begin{align}
	n  = \sum_k n_k = \sum_k \phi_k^t[0].
	\end{align}
	Similarly, for {the} primal-dual method, we  also observe that
	\begin{align}
	|\mathcal{K}^t| &  = \sum_k \phi_k^t[0].
	\end{align}	
	Hence, let $z$ be a hyperparameter where $z = 1$ if we use the primal-dual method and $z=0$ otherwise.  {To preserve the return of (\ref{eqn:global_model}), the global model of the cloud at each iteration can be computed from the users' packets designed in (\ref{eqn:user_packet}) as follows}
	\begin{align}
	\bm{\psi}^{t+1} = z\bm{\psi}^{t} + \frac{\sum_k \phi_k^t[0] \bm{\phi}_k^t[1]}{\sum_k \phi_k^t[0]}. \label{eqn:agg_function}
	\end{align}
	Thus, we now can use (\ref{eqn:agg_function}) as the aggregation function generalized for both primal and primal-dual methods solving FL. 
	
		\subsection{In-network Aggregation Process with Bipartition User Scheduling Scheme}

				\begin{figure}
				\begin{tikzpicture}[x=0.47pt,y=0.47pt,yscale=-1,xscale=1]
				
				\draw  [color={rgb, 255:red, 255; green, 255; blue, 255 }  ,draw opacity=1 ][fill={rgb, 255:red, 126; green, 211; blue, 33 }  ,fill opacity=1 ] (198.18,77.38) .. controls (198.18,72.31) and (202.67,68.2) .. (208.21,68.2) .. controls (213.76,68.2) and (218.25,72.31) .. (218.25,77.38) .. controls (218.25,82.44) and (213.76,86.55) .. (208.21,86.55) .. controls (202.67,86.55) and (198.18,82.44) .. (198.18,77.38) -- cycle ;
				\draw  [color={rgb, 255:red, 255; green, 255; blue, 255 }  ,draw opacity=1 ][fill={rgb, 255:red, 74; green, 144; blue, 226 }  ,fill opacity=1 ] (126.61,209.84) .. controls (126.61,204.77) and (131.11,200.66) .. (136.65,200.66) .. controls (142.19,200.66) and (146.69,204.77) .. (146.69,209.84) .. controls (146.69,214.91) and (142.19,219.02) .. (136.65,219.02) .. controls (131.11,219.02) and (126.61,214.91) .. (126.61,209.84) -- cycle ;
				\draw  [color={rgb, 255:red, 255; green, 255; blue, 255 }  ,draw opacity=1 ][fill={rgb, 255:red, 74; green, 144; blue, 226 }  ,fill opacity=1 ] (193.81,210.64) .. controls (193.81,205.57) and (198.31,201.46) .. (203.85,201.46) .. controls (209.39,201.46) and (213.89,205.57) .. (213.89,210.64) .. controls (213.89,215.71) and (209.39,219.82) .. (203.85,219.82) .. controls (198.31,219.82) and (193.81,215.71) .. (193.81,210.64) -- cycle ;
				\draw  [color={rgb, 255:red, 255; green, 255; blue, 255 }  ,draw opacity=1 ][fill={rgb, 255:red, 74; green, 144; blue, 226 }  ,fill opacity=1 ] (262.76,211.44) .. controls (262.76,206.37) and (267.25,202.26) .. (272.8,202.26) .. controls (278.34,202.26) and (282.83,206.37) .. (282.83,211.44) .. controls (282.83,216.51) and (278.34,220.61) .. (272.8,220.61) .. controls (267.25,220.61) and (262.76,216.51) .. (262.76,211.44) -- cycle ;
				\draw  [color={rgb, 255:red, 255; green, 255; blue, 255 }  ,draw opacity=1 ][fill={rgb, 255:red, 74; green, 144; blue, 226 }  ,fill opacity=1 ] (164.61,118.53) -- (184.82,100.52) -- (184.35,126.32) -- (178.39,123.97) -- (143.75,197.36) -- (135.92,194.27) -- (170.56,120.88) -- cycle ;
				\draw  [color={rgb, 255:red, 255; green, 255; blue, 255 }  ,draw opacity=1 ][fill={rgb, 255:red, 74; green, 144; blue, 226 }  ,fill opacity=1 ] (195.06,121.67) -- (206.31,98.53) -- (216.99,121.89) -- (210.38,121.82) -- (209.47,197.48) -- (200.77,197.39) -- (201.68,121.74) -- cycle ;
				\draw  [color={rgb, 255:red, 255; green, 255; blue, 255 }  ,draw opacity=1 ][fill={rgb, 255:red, 74; green, 144; blue, 226 }  ,fill opacity=1 ] (226.85,125.93) -- (228.07,100.25) -- (247.39,118.92) -- (241.19,121.04) -- (270.65,193.18) -- (262.5,195.96) -- (233.05,123.82) -- cycle ;
				\draw  [color={rgb, 255:red, 255; green, 255; blue, 255 }  ,draw opacity=1 ][fill={rgb, 255:red, 126; green, 211; blue, 33 }  ,fill opacity=1 ] (411.12,77.81) .. controls (411.12,72.74) and (415.62,68.63) .. (421.16,68.63) .. controls (426.7,68.63) and (431.2,72.74) .. (431.2,77.81) .. controls (431.2,82.88) and (426.7,86.99) .. (421.16,86.99) .. controls (415.62,86.99) and (411.12,82.88) .. (411.12,77.81) -- cycle ;
				\draw  [color={rgb, 255:red, 255; green, 255; blue, 255 }  ,draw opacity=1 ][fill={rgb, 255:red, 74; green, 144; blue, 226 }  ,fill opacity=1 ] (339.56,210.27) .. controls (339.56,205.21) and (344.05,201.1) .. (349.6,201.1) .. controls (355.14,201.1) and (359.63,205.21) .. (359.63,210.27) .. controls (359.63,215.34) and (355.14,219.45) .. (349.6,219.45) .. controls (344.05,219.45) and (339.56,215.34) .. (339.56,210.27) -- cycle ;
				\draw  [color={rgb, 255:red, 255; green, 255; blue, 255 }  ,draw opacity=1 ][fill={rgb, 255:red, 74; green, 144; blue, 226 }  ,fill opacity=1 ] (406.76,211.07) .. controls (406.76,206) and (411.25,201.9) .. (416.8,201.9) .. controls (422.34,201.9) and (426.83,206) .. (426.83,211.07) .. controls (426.83,216.14) and (422.34,220.25) .. (416.8,220.25) .. controls (411.25,220.25) and (406.76,216.14) .. (406.76,211.07) -- cycle ;
				\draw  [color={rgb, 255:red, 255; green, 255; blue, 255 }  ,draw opacity=1 ][fill={rgb, 255:red, 74; green, 144; blue, 226 }  ,fill opacity=1 ] (475.71,210.27) .. controls (475.71,205.21) and (480.2,201.1) .. (485.74,201.1) .. controls (491.29,201.1) and (495.78,205.21) .. (495.78,210.27) .. controls (495.78,215.34) and (491.29,219.45) .. (485.74,219.45) .. controls (480.2,219.45) and (475.71,215.34) .. (475.71,210.27) -- cycle ;
				\draw  [color={rgb, 255:red, 255; green, 255; blue, 255 }  ,draw opacity=1 ][fill={rgb, 255:red, 74; green, 144; blue, 226 }  ,fill opacity=1 ] (362.84,158.33) -- (377.61,151.83) -- (382.58,166.12) -- (376.62,163.77) -- (360.19,198.59) -- (352.36,195.5) -- (368.79,160.68) -- cycle ;
				\draw  [color={rgb, 255:red, 255; green, 255; blue, 255 }  ,draw opacity=1 ][fill={rgb, 255:red, 74; green, 144; blue, 226 }  ,fill opacity=1 ] (427.94,159.01) -- (443.03,152.44) -- (448.05,167.01) -- (441.99,164.6) -- (425.6,199.01) -- (417.62,195.84) -- (434.01,161.43) -- cycle ;
				\draw  [color={rgb, 255:red, 255; green, 255; blue, 255 }  ,draw opacity=1 ][fill={rgb, 255:red, 74; green, 144; blue, 226 }  ,fill opacity=1 ] (455.27,164.28) -- (461.25,150.25) -- (475.82,157.27) -- (469.62,159.38) -- (483.6,193.62) -- (475.45,196.4) -- (461.47,162.16) -- cycle ;
				\draw  [color={rgb, 255:red, 255; green, 255; blue, 255 }  ,draw opacity=1 ][fill={rgb, 255:red, 208; green, 2; blue, 27 }  ,fill opacity=1 ] (373.6,136.06) .. controls (373.6,130.99) and (378.09,126.89) .. (383.63,126.89) .. controls (389.18,126.89) and (393.67,130.99) .. (393.67,136.06) .. controls (393.67,141.13) and (389.18,145.24) .. (383.63,145.24) .. controls (378.09,145.24) and (373.6,141.13) .. (373.6,136.06) -- cycle ;
				\draw  [color={rgb, 255:red, 255; green, 255; blue, 255 }  ,draw opacity=1 ][fill={rgb, 255:red, 208; green, 2; blue, 27 }  ,fill opacity=1 ] (439.05,136.06) .. controls (439.05,130.99) and (443.55,126.89) .. (449.09,126.89) .. controls (454.63,126.89) and (459.12,130.99) .. (459.12,136.06) .. controls (459.12,141.13) and (454.63,145.24) .. (449.09,145.24) .. controls (443.55,145.24) and (439.05,141.13) .. (439.05,136.06) -- cycle ;
				\draw  [color={rgb, 255:red, 255; green, 255; blue, 255 }  ,draw opacity=1 ][fill={rgb, 255:red, 74; green, 144; blue, 226 }  ,fill opacity=1 ] (399.35,93.76) -- (411.1,90.49) -- (414.33,101.32) -- (409.81,99.04) -- (395.54,122.66) -- (389.6,119.66) -- (403.87,96.04) -- cycle ;
				\draw  [color={rgb, 255:red, 255; green, 255; blue, 255 }  ,draw opacity=1 ][fill={rgb, 255:red, 74; green, 144; blue, 226 }  ,fill opacity=1 ] (424.25,99.53) -- (429.14,89.64) -- (439.93,94.18) -- (435.2,95.8) -- (444.79,119.29) -- (438.57,121.41) -- (428.98,97.92) -- cycle ;
				\draw  [color={rgb, 255:red, 255; green, 255; blue, 255 }  ,draw opacity=1 ][fill={rgb, 255:red, 126; green, 211; blue, 33 }  ,fill opacity=1 ] (542.03,80.2) .. controls (542.03,75.14) and (546.53,71.03) .. (552.07,71.03) .. controls (557.61,71.03) and (562.11,75.14) .. (562.11,80.2) .. controls (562.11,85.27) and (557.61,89.38) .. (552.07,89.38) .. controls (546.53,89.38) and (542.03,85.27) .. (542.03,80.2) -- cycle ;
				\draw  [color={rgb, 255:red, 255; green, 255; blue, 255 }  ,draw opacity=1 ][fill={rgb, 255:red, 208; green, 2; blue, 27 }  ,fill opacity=1 ] (540.29,137.66) .. controls (540.29,132.59) and (544.78,128.48) .. (550.32,128.48) .. controls (555.87,128.48) and (560.36,132.59) .. (560.36,137.66) .. controls (560.36,142.73) and (555.87,146.84) .. (550.32,146.84) .. controls (544.78,146.84) and (540.29,142.73) .. (540.29,137.66) -- cycle ;
				\draw  [color={rgb, 255:red, 255; green, 255; blue, 255 }  ,draw opacity=1 ][fill={rgb, 255:red, 74; green, 144; blue, 226 }  ,fill opacity=1 ] (541.16,201.5) .. controls (541.16,196.43) and (545.65,192.32) .. (551.2,192.32) .. controls (556.74,192.32) and (561.23,196.43) .. (561.23,201.5) .. controls (561.23,206.57) and (556.74,210.67) .. (551.2,210.67) .. controls (545.65,210.67) and (541.16,206.57) .. (541.16,201.5) -- cycle ;
				
				\draw (108.44,155.59) node [anchor=north west][inner sep=0.75pt]    {$\bm{\phi} _{1}^{t}$};
				\draw (171.28,167.56) node [anchor=north west][inner sep=0.75pt]    {$\bm{\phi} _{2}^{t}$};
				\draw (265.41,158.78) node [anchor=north west][inner sep=0.75pt]    {$\bm{\phi} _{K}^{t}$};
				\draw (192.79,225.69) node [anchor=north west][inner sep=0.75pt]   [align=left] {{\fontfamily{ptm}\selectfont (a)}};
				\draw (331.86,156.82) node [anchor=north west][inner sep=0.75pt]    {$\bm{\phi} _{1}^{t}$};
				\draw (397.32,163.2) node [anchor=north west][inner sep=0.75pt]    {$\bm{\phi} _{2}^{t}$};
				\draw (478.35,159.21) node [anchor=north west][inner sep=0.75pt]    {$\bm{\phi} _{K}^{t}$};
				\draw (405.67,224.12) node [anchor=north west][inner sep=0.75pt]   [align=left] {{\fontfamily{ptm}\selectfont (b)}};
				\draw (371.94,86.96) node [anchor=north west][inner sep=0.75pt]    {$\bm{\varphi} _{1}^{t}$};
				\draw (444.12,86.96) node [anchor=north west][inner sep=0.75pt]    {$\bm{\varphi} _{M}^{t}$};
				\draw (573.14,68.11) node [anchor=north west][inner sep=0.75pt]   [align=left] {{\fontfamily{helvet}\selectfont Cloud}};
				\draw (575.14,126.36) node [anchor=north west][inner sep=0.75pt]   [align=left] {{\fontfamily{helvet}\selectfont Edge}};
				\draw (574.89,190.2) node [anchor=north west][inner sep=0.75pt]   [align=left] {Users};
				
				\end{tikzpicture}

			\caption{The logical view of (a) conventional network model and (b) multi-tier edge network model with INA.   \label{fig::logic_net} }	
			\end{figure}
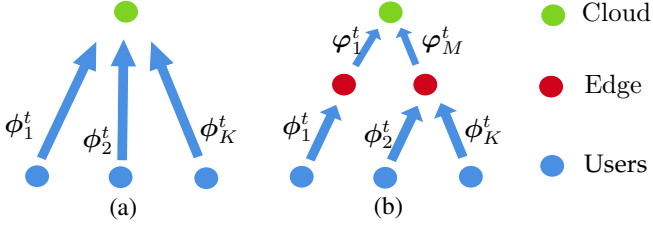
				
		Under the proposed bipartition scheduling scheme $s_b$, users are scheduled to aggregate their models in two partitions. In this section, we propose a novel in-network aggregation (INA) process, as illustrated in Fig. \ref{fig::logic_net}, that enable edge nodes to support {the cloud} node in decentralizing the aggregation process. Let $\bm{\chi}^t_{m,j}$ denote the local aggregated  model of edge node $m$, for user partition $\mathcal{P}_j$, such that
		\begin{align}
		\bm{\chi}^t_{m,j}  = \frac{\sum_{k \in \mathcal{K}_m, k \in \mathcal{P}_j} \phi_k^t[0] \bm{\phi}_k^t[1]}{\sum_{k \in \mathcal{K}_m,k \in \mathcal{P}_j} \phi_k^t[0]}.
		\end{align}	
		Let $\bm{\varphi}_{m,j}^t = \{\varphi_{m,j}^t[0],\bm{\varphi}_{m,j}^t[1] \}$ denote the message edge node $m$ sends to  {the cloud} node, for user partition $\mathcal{P}_j$, such that
		\begin{align}
		\{ \varphi_{m,j}^t[0],\bm{\varphi}_{m,j}^t[1]  \} = \Big \{ \sum_{k \in \mathcal{K}_m,k \in \mathcal{P}_j} \phi_k^t[0] , \bm{\chi}^t_{m,j} \Big \}.
		\end{align}
		Let $\bm{\lambda}^t_j = \{\lambda^t_j[0],\bm{\lambda}^t_j[1] \}$ denote the weight and the parameters of the aggregated model for user partition $\mathcal{P}_j$. After all edge nodes' messages $\bm{\varphi}_{m,j}^t$ are sent to the cloud, we compute $\bm{\lambda}^t_j$ as follows:
		\begin{align}
\begin{cases}
 \lambda^t_j[0] & =  \sum_{k \in \mathcal{K}_0, k \in \mathcal{P}_j} \phi_k^t[0] + \sum_{m}\varphi_{m,j}^t[0], \\
 \bm{\lambda}^t_j[1] & =  \frac{\sum_{k \in \mathcal{K}_0, k \in \mathcal{P}_j} \phi_k^t[0] \bm{\phi}_k^t[1] + \sum_{m}\varphi_{m,j}^t[0]\bm{\varphi}_{m,j}^t[1]}{\sum_{k \in \mathcal{K}_0, k \in \mathcal{P}_j} \phi_k^t[0] + \sum_{m}\varphi_{m,j}^t[0]}.
\end{cases}		
		\end{align}
		\begin{proposition} \label{proposition:preserve_return}
			In order to preserve the return of (\ref{eqn:agg_function}), the global model is computed as follows,
			\begin{align}
			\bm{\psi}^{t+1}  = z\bm{\psi}^{t}  + \frac{\sum_j \lambda^t_j[0] \bm{\lambda}^t_j[1]}{\sum_j \lambda^t_j[0]}.  \label{eqn:cloud_process}
			\end{align}	
		\end{proposition}
		\begin{proof}
			We have:
			\begin{align}
	\nonumber		&	\sum_j \lambda^t_j[0] \\
\nonumber	& =\hspace{-0.2cm} \sum_{k \in \mathcal{K}_0, k \in \mathcal{P}_1} \hspace{-0.2cm} \phi_k^t[0] +  \sum_{m}\varphi_{m,1}^t[0] + \hspace{-0.2cm}\sum_{k \in \mathcal{K}_0, k \in \mathcal{P}_2} \hspace{-0.2cm} \phi_k^t[0] + \sum_{m}\varphi_{m,2}^t[0]\\
	\nonumber			& = \hspace{-0.2cm} \sum_{k \in \mathcal{K}_0, k \in \mathcal{P}_1} \hspace{-0.2cm} \phi_k^t[0] + \hspace{-0.35cm} \sum_{k \in \mathcal{K}_m, k \in \mathcal{P}_1 } \hspace{-0.4cm} \phi_k^t[0]+ \hspace{-0.2cm} \sum_{k \in \mathcal{K}_0, k \in \mathcal{P}_2} \hspace{-0.2cm} \phi_k^t[0]+ \hspace{-0.3cm} \sum_{k \in \mathcal{K}_m, k \in \mathcal{P}_2 } \hspace{-0.4cm} \phi_k^t[0]\\
	& = \sum_k \phi_k^t[0]. \label{eqn:tmp1}
			\end{align}
			We also have
			\begin{align}
				& \nonumber \sum_j \lambda^t_j[0] \bm{\lambda}^t_j[1]\\
			\nonumber	 & = \sum_j \Bigg ( \sum_{k \in \mathcal{K}_0, k \in \mathcal{P}_j} \phi_k^t[0] + \sum_{m}\varphi_{m,j}^t[0]\Bigg ) \\
\nonumber			& \hspace{1cm} \Bigg (\frac{\sum_{k \in \mathcal{K}_0, k \in \mathcal{P}_j} \phi_k^t[0] \bm{\phi}_k^t[1] + \sum_{m}\varphi_{m,j}^t[0]\bm{\varphi}_{m,j}^t[1]}{\sum_{k \in \mathcal{K}_0, k \in \mathcal{P}_j} \phi_k^t[0] + \sum_{m}\varphi_{m,j}^t[0]} \Bigg ) \\
			\nonumber	 & = \sum_j \Bigg ( \sum_{k \in \mathcal{K}_0, k \in \mathcal{P}_j} \phi_k^t[0] \bm{\phi}_k^t[1] + \sum_{m}\varphi_{m,j}^t[0]\bm{\varphi}_{m,j}^t[1] \Bigg )\\
			\nonumber	 & = \sum_j \Bigg ( \sum_{k \in \mathcal{K}_0, k \in \mathcal{P}_j} \phi_k^t[0] \bm{\phi}_k^t[1] \\
\nonumber	 & \hspace{1cm}			+ \sum_{m}\Bigg(\sum_{k \in \mathcal{K}_m,k \in \mathcal{P}_j} \phi_k^t[0] \frac{\sum_{k \in \mathcal{K}_m, k \in \mathcal{P}_j} \phi_k^t[0] \bm{\phi}_k^t[1]}{\sum_{k \in \mathcal{K}_m,k \in \mathcal{P}_j} \phi_k^t[0]} \Bigg) \Bigg )\\
		\nonumber		 & =\sum_j \Bigg ( \sum_{k \in \mathcal{K}_0, k \in \mathcal{P}_j} \phi_k^t[0] \bm{\phi}_k^t[1] + \sum_{m} \sum_{k \in \mathcal{K}_m, k \in \mathcal{P}_j} \phi_k^t[0] \bm{\phi}_k^t[1] \Bigg )
		\end{align}
		\begin{align}
				 = \sum_k \phi_k^t[0] \bm{\phi}_k^t[1]. \hspace{5.4cm}\label{eqn:tmp2}
			\end{align}
			From (\ref{eqn:tmp1}) and (\ref{eqn:tmp2}), the return of (\ref{eqn:agg_function}) is preserved when the global model is aggregated using (\ref{eqn:cloud_process}).
		\end{proof}
		\begin{theorem} (Theorem $2$ of \cite{WangJSAC2019} as well as Theorem $4.2$ and Theorem $4.3$ of \cite{Ma2017}) Assuming that $l_i(\cdot)$  {is} convex with $1/\mu$-smoothness and $r(\cdot)$ is 1-strongly convex. If the primal method is used, given that $\mathbf{w}^{*}$ is the global optimal model, after $\iota$ learning rounds, we have the following convergence upperbound:
			\begin{align}
				P(\mathbf{w}^{\iota}) - P(\mathbf{w}^{*}) \leq \theta_g, \label{eqn:FedAvg_convergence}
			\end{align}
			where $\theta_g$ is the global accuracy. Here, $\iota$ is upperbounded by $\mathcal{O}(1/\theta_g)$. On the other hand, if the primal-dual method is used, after $\iota'$ learning rounds, we have the following convergence upperbound:
			\begin{align}
			\mathbb{E} \Bigg [P(\mathbf{w}(\bm{\alpha}^{\iota'} )) - G(\bm{\alpha}^{\iota'} ) \Bigg]\leq \theta_g .\label{eqn:CoCoA_convergence}
			\end{align}
			 Here, $\iota'$ is upperbounded by $\frac{\mathcal{O}(\log(1/\theta_g))}{1-\theta_l}$, where $\theta_l$ is the local accuracy defining at Assumption $4.1$ of \cite{Ma2017}.
		\end{theorem}
		\begin{proof}
		From Proposition \ref{proposition:preserve_return}, it can be seen that the INA process always remains the returns of the FL aggregated model for both FedAvg and CoCoA. Thus, if FedAvg is used, Eq. (\ref{eqn:FedAvg_convergence}) holds by following Theorem 2 of \cite{WangJSAC2019}. Similarly, if CoCoA is used, Eq. (\ref{eqn:CoCoA_convergence}) holds by following Theorem $4.2$ and Theorem $4.3$ of \cite{Ma2017}.
		\end{proof}
		\begin{remark}
			In a special case where no user is directly associated with the cloud node, the proposed edge network architecture and the INA process could reduce the traffic and computing overhead at the cloud node by a factor of $K/M$ in comparison with conventional FL star network topology,  {where $K$ is the number of users and $M$ is the number of edge nodes.}
		\end{remark}
		\begin{proof}
			The traffic overhead at the cloud node is proportional to the number of models the cloud node received. For computing overhead, we also assume that the aggregation operation at the cloud runs linearly to the number of received models. Hence, under the conventional FL star network topology, the cloud node receives $K$ models from its users. 
			
			In contrast, with the proposed edge network architecture using the INA, when no user is directly associated with the cloud node, i.e., $\mathcal{K}_0 =  \emptyset $, the cloud node only receives $M$ aggregated models from the edge nodes. As a result, in practice where $K >> M$, the two mentioned overheads are reduced by a factor of $K/M$.
		\end{proof}

	Under the proposed INA process,  {the transmission latency $\gamma_m$ between an edge node $m$ and the cloud node is revised}. If there is no user associate with an edge node $m$, i.e., $\sum_{k}  a_{km} = 0$, $\gamma_m$ is zero. Otherwise, since edge node $m$ only needs to send the aggregated model the cloud node, $\gamma_m$ is  {revised} as follows
	\begin{align}
		\gamma_m = \min \Bigg \{\frac{D}{B^{\mathrm{bk}}_m}, \frac{D\sum_{k \in \mathcal{K}_m }  a_{km}}{B^{\mathrm{bk}}_m} \Bigg \}. \label{eqn:revised_edge_cloud_latency}
	\end{align}

	\section{Network Routing and Resource Allocation Framework for FL} \label{section:problem_for}
	Following the user scheduling scheme and the INA process proposed above, in this section, we  {aim} to optimize  {the} network' resource allocation and the routing matrices to minimize the  {uplink aggregation latency in one iteration} for a given partition $\mathcal{P}_j^t$.
	When the proposed INA process is considered, the joint routing and resource optimization problem for users in $\mathcal{P}_j^t$ can be written as follows:
	\begin{subequations}
			\begin{align}
			\nonumber \mathscr F_1^u: & \min_{\mathbf{A},\mathbf{R}}  {T^{\mathrm{u}}},  \\
			\rm{s.t.}~	& \sum_{m=0}^{M}  a_{km} = 1, \forall k \in  {\mathcal{P}_j^t}, \label{eqn:constraint:assignment}\\
			&\sum_{k \in K_0} r_{k0}  \leq W^u, \label{eqn:constraint:cloud_bw}\\
			& \sum_{k \in \mathcal{K}_m} r_{km} \leq B^{\mathrm{fr}}_m, \forall m \in \mathcal{M} \setminus \{0\}, \label{eqn:constraint:edge_bw}\\ 
			& a_{km} \in \{0,1\},\label{eqn:constraint:assign_variable}\\
			& r_{k0} \in[0, W^u],~\textrm{and}~ r_{km} \in[0,B^{\mathrm{fr}}_m] , \forall m \in \mathcal{M} \setminus \{0\}. \label{eqn:constraint:bw_variable}
			\end{align}
	\end{subequations}
	The constraints (\ref{eqn:constraint:assignment}) guarantee that a user can associate with only one edge node in one iteration\footnote{A more general network model could be considered. The binary constraints can be replaced by $a_{km} \in [0,1]$. However, in that scenario, the proposed in-network aggregation process at edge nodes could run with incomplete models. Those models are then forwarded to the cloud to be aggregated. Consequently, the network suffers extra latency and traffic as penalties. In practice, as $D$ and $K$ can be significantly large, the penalties make the solution of $a_{km}$ close to binary.}. The constraints (\ref{eqn:constraint:cloud_bw}) and (\ref{eqn:constraint:edge_bw}) ensure that total users' data rates associated with each edge node or the cloud node must
	not exceed its bandwidth capacity.  {Here, the transmission latency between an edge node $m$ and the cloud node is computed as in (\ref{eqn:revised_edge_cloud_latency}).} The mixed integer non-linear programming problem $\mathscr F_1^u$ is actually NP-Hard. We then propose a highly efficient randomized rounding solution for practical implementation in the next section.
%
	\begin{proposition} \label{prop:NP_Hard}
		$\mathscr F_1^u$ is a NP-Hard problem.
	\end{proposition}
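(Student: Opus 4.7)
The plan is to establish NP-hardness of $\mathscr F_1^u$ by exhibiting a polynomial-time reduction from a well-known NP-hard problem. Given the structure of $\mathscr F_1^u$ as a mixed integer nonlinear program combining binary routing decisions $\mathbf{A}$, continuous bandwidth allocations $\mathbf{R}$, node-level capacity constraints, and a min-max latency objective over heterogeneous edge nodes, natural candidates for the source problem are the Generalized Assignment Problem (GAP), the Bin Packing Problem, or makespan minimization on unrelated parallel machines (R$\|$C$_{\max}$).

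First, I would pass to the decision version of $\mathscr F_1^u$: given a threshold $T^*$, does there exist a feasible pair $(\mathbf{A},\mathbf{R})$ satisfying (\ref{eqn:constraint:assignment})--(\ref{eqn:constraint:bw_variable}) with $T^{\mathrm{u}} \leq T^*$? Since NP-hardness of this decision version implies NP-hardness of the optimization problem, it suffices to work with the decision formulation.

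Second, I would eliminate the continuous variables $\mathbf{R}$ by exploiting the inner structure of the problem: for any fixed assignment $\mathbf{A}$, the optimal per-edge allocation equalizes latencies across co-located users, yielding $r_{km} = B^{\mathrm{fr}}_m/|\mathcal{K}_m|$ and $T^{\mathrm{u,fr}}_m = D|\mathcal{K}_m|/B^{\mathrm{fr}}_m$, with an additional backhaul term $\gamma_m$ from (\ref{eqn:revised_edge_cloud_latency}) whenever $\mathcal{K}_m \neq \emptyset$. This reduces the decision question to a purely combinatorial assignment problem over $\mathbf{A}$, which I then identify with an instance of GAP (or Bin Packing): users play the role of items, edge nodes play the role of bins, the per-edge latency bound $T^* - D/B^{\mathrm{bk}}_m$ induces a bin capacity, and the uniform model size $D$ combined with heterogeneous $B^{\mathrm{fr}}_m$, $B^{\mathrm{bk}}_m$, $W^u$ encodes the size/cost profile needed to embed the source problem.

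The main obstacle I anticipate is precisely this encoding step: because all users transmit a common data size $D$, the reduction cannot directly exploit item-size heterogeneity in the classical way. I would overcome this by leveraging the heterogeneity of edge parameters $(B^{\mathrm{fr}}_m, B^{\mathrm{bk}}_m)$ and, if needed, by introducing auxiliary ``dummy'' users/edges as gadgets that force specific assignments in any feasible solution. Once the map is shown to be polynomial in the input size and the ``yes''-instances are verified to be in one-to-one correspondence with those of the source problem, NP-hardness of $\mathscr F_1^u$ follows, justifying the subsequent development of a randomized-rounding approximation.
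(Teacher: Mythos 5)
Your first two steps coincide with the paper's own argument: the paper also works with the equal-bandwidth-split observation (its Lemmas~1 and~2, showing $T^{\mathrm{u}}_0 \geq D|\mathcal{K}_0|/W^u$ and $T^{\mathrm{u}}_m \geq D|\mathcal{K}_m|/B^{\mathrm{fr}}_m + \gamma_m$ with equality at the uniform allocation), thereby eliminating $\mathbf{R}$ and collapsing $\mathscr F_1^u$ to the purely combinatorial problem $\mathscr F_2^u$ over $\mathbf{A}$. Where you diverge is the final step: the paper takes the special case $B^{\mathrm{bk}}_m \to \infty$ and simply asserts that the resulting $\min_{\mathbf{A}} \max_m \{D\sum_k a_{km}/B^{\mathrm{fr}}_m\}$ is ``mathematically similar to'' makespan minimization on parallel machines, whereas you attempt an explicit reduction from GAP or Bin Packing.

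The genuine gap is that your reduction is never constructed --- and the obstacle you yourself flag is exactly where it breaks. After eliminating $\mathbf{R}$, every user contributes the \emph{same} load $D$, so the objective of $\mathscr F_2^u$ depends on $\mathbf{A}$ only through the counts $|\mathcal{K}_m| = \sum_k a_{km}$: each node's latency is a monotone function $f_m(|\mathcal{K}_m|)$ of its own count, and the decision version ``is there an assignment with $\max_m f_m(|\mathcal{K}_m|) \leq T^*$'' reduces to checking whether $\sum_m \max\{n : f_m(n) \leq T^*\} \geq K$, which is polynomial. Heterogeneity of $(B^{\mathrm{fr}}_m, B^{\mathrm{bk}}_m)$ alone therefore cannot encode Bin Packing or GAP, since those problems are hard precisely because of heterogeneous \emph{item} sizes; the counts-only structure here corresponds to the polynomially solvable unit-job case. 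Your ``dummy users/edges as gadgets'' remark is a placeholder for the one idea the proof actually needs, namely some additional per-pair structure (e.g., restricted user--node eligibility, or user-dependent loads) that would break the counts-only decomposition --- and you do not supply it. In fairness, the paper's own appeal to makespan minimization suffers from the same uniform-size issue, but as a proof your proposal is incomplete at the decisive step.
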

	\begin{proof}
		To prove that $\mathscr F_1^u$ is a NP-Hard problem, we first introduce Lemma \ref{lemma:1} and Lemma \ref{lemma:2}. These two lemmas allow us to transform $\mathscr F_1^u$ into an equivalent Integer Linear Programming, which is then proven to be NP-Hard. Hence, $\mathscr F_1^u$ is also NP-Hard.
	\begin{lemma} \label{lemma:1} Given any uplink routing matrix $\mathbf{A}$, with $ |\mathcal{K}_0| = \sum_{k \in \mathcal{K}_0}  a_{k0} >0 $, for problem $\mathscr F_1^u$, at the cloud node, the uplink latency for users associated with the cloud node  satisfies
		\begin{align}
		T^{\mathrm{u}}_0 & = \max_{k \in \mathcal{K}_0} \Bigg\{D\frac{a_{k0} }{r_{k0} } \Bigg \}  \geq \frac{D |\mathcal{K}_0|}{W^u}, \forall k \in \mathcal{K}_0 .
		\end{align}
		Here, the equality happens when $r_{10} = \ldots = r_{|\mathcal{K}_0|0} = \frac{W^u}{|\mathcal{K}_0|}$. 
	\end{lemma} 
	
	\begin{lemma}\label{lemma:2}
		Given any uplink routing matrix $\mathbf{A}$, with $ |\mathcal{K}_m| = \sum_{k \in \mathcal{K}_m}  a_{km} >0$, for problem $\mathscr F_1^u$, at each edge node $m$, the uplink latency for users associated with edge node $m$ satisfies
		\begin{align}
\nonumber		T^{\mathrm{u}}_m  & = \max_{k \in \mathcal{K}_m} \Bigg\{D\frac{a_{km}}{r_{km}} \Bigg \} + \min \Bigg \{\frac{D}{B^{\mathrm{bk}}_m}, \frac{D\sum_{k \in \mathcal{K}_m }  a_{km}}{B^{\mathrm{bk}}_m} \Bigg \}\\
		& \geq \frac{D |\mathcal{K}_m|}{B^{\mathrm{fr}}_m}  + \min \Bigg \{\frac{D}{B^{\mathrm{bk}}_m}, \frac{D|\mathcal{K}_m|}{B^{\mathrm{bk}}_m} \Bigg \}.
		\end{align}
		The equality happens when $r_{1m} = \ldots = r_{|\mathcal{K}_m|m} = \frac{B^{\mathrm{fr}}_m}{|\mathcal{K}_m|} $.
	\end{lemma}		
	Following Lemma \ref{lemma:1} and Lemma \ref{lemma:2}, 	we  {can observe} that the network operator only needs to  optimize the uplink routing matrix while the uplink data rates for users associated with edge nodes or cloud node will be  {equally allocated among their directly associated users.} If there exists $m$ such that $|\mathcal{K}_m| = 0$, we can arbitrarily set the value of $\mathbf{r}_{m}$ and set the value of $T^{\mathrm{u}}_m$ as $0$. As a result, $\mathscr F_1^u$ is reduced to
	\begin{align}
	\nonumber \mathscr F_2^u: & \min_{\mathbf{A}} \max \Bigg\{ \max_m \Bigg\{ D\frac{ \sum_{k \in \mathcal{K}_m}  a_{km}}{B^{\mathrm{fr}}_m} \\
	\nonumber	& \hspace{0.5cm} + \min \Bigg \{\frac{D}{B^{\mathrm{bk}}_m}, \frac{D\sum_{k \in \mathcal{K}_m }  a_{km}}{B^{\mathrm{bk}}_m} \Bigg \}, D \frac{ \sum_{k \in \mathcal{K}_0}  a_{k0} }{W^u}  \Bigg\},\\
	\rm{s.t.}~	& (\ref{eqn:constraint:assignment})~ \textrm{and} ~(\ref{eqn:constraint:assign_variable}),
	\end{align}
	where the optimal solution in $\mathscr F_2^u$ is also the optimal solution in $\mathscr F_1^u$. Consider a special case when the edge-to-cloud latency are negligible, i.e. $B^{\mathrm{bk}}_m \to \infty$. For the convenience of notations, let $B^{\mathrm{fr}}_0 \triangleq W^u$. We have 
	\begin{align}
	\nonumber \mathscr F_2^u: & \min_{\mathbf{A}} \Bigg\{  \max_m \Bigg\{D \frac{ \sum_{k \in \mathcal{K}_m}  a_{km}}{B^{\mathrm{fr}}_m} \Bigg\} \Bigg \} ,\\
	\rm{s.t.}~	& (\ref{eqn:constraint:assignment})~ \textrm{and} ~(\ref{eqn:constraint:assign_variable}),
	\end{align}		
	This formulation is mathematically similar to the makespan minimization problem for parallel machines, which is NP-Hard	\cite{Pinedo2008}, where the makespan is the completion time of the last task. The special case is NP-hard and hence, so is $\mathscr F_2^u$. Since $\mathscr F_2^u$ is an NP-Hard problem, $\mathscr F_1^u$ is also an NP-Hard problem.
	\end{proof}
	
	If the proposed in-network computation protocol is not considered, the aggregation latency-minimized routing framework is formulated as  followed
	\begin{align}
	\nonumber \mathscr Q_1^u: & \min_{\mathbf{A},\mathbf{R}}  {T^{\mathrm{u}}},\\
\rm{s.t.}~	&  (\ref{eqn:constraint:assignment})-(\ref{eqn:constraint:bw_variable}),
	\end{align}
	 {where the transmission latency between an edge node $m$ and the cloud node is computed as in (\ref{eqn:no_mec_edge_cloud_lat})}.
		\begin{proposition}
			$\mathscr Q_1^u$ is a NP-Hard problem
		\end{proposition}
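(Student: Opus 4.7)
The plan is to mirror the argument used for Proposition \ref{prop:NP_Hard} almost verbatim, since $\mathscr{Q}_1^u$ differs from $\mathscr{F}_1^u$ only in the expression for the edge-to-cloud transmission latency $\gamma_m$: without INA we have $\gamma_m = D \sum_{k \in \mathcal{K}_m} a_{km}/B^{\mathrm{bk}}_m$ instead of the $\min\{\cdot,\cdot\}$ expression. First I would restate the analogues of Lemma \ref{lemma:1} and Lemma \ref{lemma:2} for $\mathscr{Q}_1^u$. Lemma \ref{lemma:1} carries over unchanged, because the cloud-attached users' latency depends only on $W^u$ and the split of $r_{k0}$, which is independent of whether INA is used. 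For the edge-attached users, the analogue of Lemma \ref{lemma:2} becomes
\begin{align}
T^{\mathrm{u}}_m = \max_{k \in \mathcal{K}_m}\Bigl\{ D\frac{a_{km}}{r_{km}}\Bigr\} + \frac{D \sum_{k \in \mathcal{K}_m} a_{km}}{B^{\mathrm{bk}}_m} \geq \frac{D|\mathcal{K}_m|}{B^{\mathrm{fr}}_m} + \frac{D|\mathcal{K}_m|}{B^{\mathrm{bk}}_m},
\end{align}
with equality when $r_{1m}=\dots=r_{|\mathcal{K}_m|m}=B^{\mathrm{fr}}_m/|\mathcal{K}_m|$. The proof is identical to that of Lemma \ref{lemma:2}: given any $\mathbf{A}$, the inner $\max$ is minimized by the equal split of the fronthaul bandwidth, while the backhaul term does not depend on $\mathbf{R}$ at all.

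Given these two lemmas, the optimal $\mathbf{R}$ is again determined by $\mathbf{A}$ via equal splitting, and $\mathscr{Q}_1^u$ reduces to an integer program $\mathscr{Q}_2^u$ over $\mathbf{A}$ only, analogous to $\mathscr{F}_2^u$, subject to constraints (\ref{eqn:constraint:assignment}) and (\ref{eqn:constraint:assign_variable}). Next I would consider the special case $B^{\mathrm{bk}}_m \to \infty$ for all $m$, which zeroes out the backhaul term and yields exactly the same residual problem as the special case used in the proof of Proposition \ref{prop:NP_Hard}, namely
\begin{align}
\min_{\mathbf{A}} \max_{m \in \{0\} \cup \mathcal{M}} \Bigl\{ D\frac{\sum_{k} a_{km}}{B^{\mathrm{fr}}_m}\Bigr\},
\end{align}
where $B^{\mathrm{fr}}_0 \triangleq W^u$. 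This is precisely the makespan minimization problem on unrelated parallel machines, which is NP-hard \cite{Pinedo2008}.

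Since a special case of $\mathscr{Q}_2^u$ is NP-hard, $\mathscr{Q}_2^u$ itself is NP-hard, and therefore so is $\mathscr{Q}_1^u$. I do not expect any real obstacle: the only subtle point is verifying that the removal of the $\min$ in $\gamma_m$ does not break the equal-bandwidth-splitting argument, but since $\gamma_m$ in $\mathscr{Q}_1^u$ depends solely on $\mathbf{A}$ (not on $\mathbf{R}$), the bandwidth-allocation step is if anything simpler here than in $\mathscr{F}_1^u$. The reduction via $B^{\mathrm{bk}}_m \to \infty$ is the cleanest way to borrow the parallel-machines hardness result, avoiding any need to re-derive it for the two-term objective.
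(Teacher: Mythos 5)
Your proposal matches the paper's own proof essentially step for step: it introduces the same two lemmas (the cloud-side lemma unchanged, the edge-side lemma with the backhaul term $D\sum_k a_{km}/B^{\mathrm{bk}}_m$ replacing the $\min$), reduces $\mathscr{Q}_1^u$ to an integer program over $\mathbf{A}$ via equal bandwidth splitting, and then invokes the makespan-minimization hardness in the special case $B^{\mathrm{bk}}_m \to \infty$. This is correct and is the same argument the paper gives.
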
 
		\begin{proof}
		Similar to the proof of Proposition \ref{prop:NP_Hard}, we first introduce Lemma \ref{lemma:3} and Lemma \ref{lemma:4}. The proofs of these lemmas are similar to those of Lemma \ref{lemma:1} and Lemma \ref{lemma:2}, hence omitted.
	\begin{lemma} \label{lemma:3} Given any uplink routing matrix $\mathbf{A}$, with $ |\mathcal{K}_0| = \sum_{k \in \mathcal{K}_0}  a_{k0} >0 $, for problem $\mathscr Q_1^u$, at the cloud node, we observe that $T^{\mathrm{u}}_0$ satisfies
		\begin{align}
		T^{\mathrm{u}}_0 & = \max_{k \in \mathcal{K}_0} \Bigg\{D\frac{a_{k0} }{r_{k0} } \Bigg \}  \geq \frac{D |\mathcal{K}_0|}{W^u}, \forall k \in \mathcal{K}_0 .
		\end{align}
		Here, the equality happens when $r_{10} = \ldots = r_{|\mathcal{K}_0|0} = \frac{W^u}{|\mathcal{K}_0|}$. 
	\end{lemma}	
	\begin{lemma} \label{lemma:4} Given any uplink routing matrix $\mathbf{A}$, with $ |\mathcal{K}_m| = \sum_{k \in \mathcal{K}_m}  a_{km} >0$, for problem $\mathscr Q_1^u$, at each edge node $m$, we observe that $T^{\mathrm{u}}_m$ satisfies
		\begin{align}
		T^{\mathrm{u}}_m  & = \max_{k \in \mathcal{K}_m} \Bigg\{D\frac{a_{km}}{r_{km}} \Bigg \} + \frac{D|\mathcal{K}_m|}{B^{\mathrm{bk}}_m} \geq \frac{D |\mathcal{K}_m|}{B^{\mathrm{fr}}_m}  + \frac{D|\mathcal{K}_m|}{B^{\mathrm{bk}}_m},
		\end{align}
		where $ |\mathcal{K}_m| = \sum_{k \in \mathcal{K}_m}  a_{km}$.	The equality happens when $r_{1m} = \ldots = r_{|\mathcal{K}_m|m} = \frac{B^{\mathrm{fr}}_m}{|\mathcal{K}_m|} $.
	\end{lemma}	
	Following Lemma \ref{lemma:3} and Lemma \ref{lemma:4}, we see that the network operator  {needs} to only optimize the uplink routing matrix, while the uplink data rates for users associated with edge nodes or cloud node will be  {equally allocated}.  As a result, $\mathscr Q_1^u$ is reduced to
	\begin{align}
	\nonumber \mathscr Q_2^u: & \min_{\mathbf{A}} \max \Bigg\{ \max_m \Bigg\{ D\frac{ \sum_{k \in \mathcal{K}_m}  a_{km}}{B^{\mathrm{fr}}_m} + D\frac{\sum_{k \in \mathcal{K}_m}  a_{km}}{B^{\mathrm{bk}}_m} \Bigg\}, \\
\nonumber & \hspace{1.7cm}	D\frac{ \sum_{k \in \mathcal{K}_0}  a_{k0} }{W^u}  \Bigg\},\\
	\rm{s.t.}~	& (\ref{eqn:constraint:assignment})~ \textrm{and} ~(\ref{eqn:constraint:assign_variable}) ,
	\end{align}
	where the optimal solution in $\mathscr Q_2^u$ is also the optimal solution in $\mathscr Q_1^u$. Similar to Proposition \ref{prop:NP_Hard},  we consider a special case when the edge-to-cloud latency are negligible, i.e. $B^{\mathrm{bk}}_m \to \infty$. Here, we see that $\mathscr Q_2^u$ is a NP-Hard problem, so is $\mathscr Q_1^u$.
		\end{proof}

	\section{Randomized Rounding Based Solution} \label{section:solution}
	\subsection{With In-Network Computation Protocol}
	In this section, we present an approximation algorithm for the main problem that leverages a randomized rounding technique. The proposed algorithm is summarized in Algorithm \ref{algorithm:heu_per_plan}. First, we introduce auxiliary variables $y$ and $\gamma_m$ into $\mathscr F_2^u$ such that
	\begin{align}
	y & \geq \max \Bigg \{ \max_m \Bigg\{D \frac{ \sum_{k \in \mathcal{K}_m}  a_{km}}{B^{\mathrm{fr}}_m} + \gamma_m \Bigg\}, D\frac{ \sum_{k \in \mathcal{K}_0}  a_{k0} }{W^u}  \Bigg \},\\
	\gamma_m & \leq \min \Bigg \{\frac{D}{B^{\mathrm{bk}}_m}, \frac{D\sum_{k \in \mathcal{K}_m }  a_{km}}{B^{\mathrm{bk}}_m} \Bigg \}, \forall m
	\end{align}
	Problem $\mathscr F_2^u$ is then equivalently rewritten as
	\begin{subequations}
	\begin{align}
	\nonumber \mathscr F_3^u: & \min_{y,\{\gamma_m\}, \{a_{km}\}} y, \\
	\mathrm{s.t.}~& y \geq D\frac{ \sum_{k \in \mathcal{K}_m}  a_{km}}{B^{\mathrm{fr}}_m} + \gamma_m, \forall m, \label{eqn:aux_cons:inc:edge_lat}\\
	& y \geq D\frac{ \sum_{k \in \mathcal{K}_0}  a_{k0} }{W^u}, \label{eqn:aux_cons:inc:cloud_lat}\\
	& \gamma_m \leq \frac{D}{B^{\mathrm{bk}}_m}, \forall m \in \mathcal{M} \setminus \{0\}, \label{eqn:aux_cons:inc:ina} \\
	& \gamma_m \leq D\frac{ \sum_{k \in \mathcal{K}_0}  a_{k0} }{W^u} \label{eqn:aux_cons:inc:non_ina}, \forall m \in \mathcal{M} \setminus \{0\}, \\
	& (\ref{eqn:constraint:assignment})~ \textrm{and} ~(\ref{eqn:constraint:assign_variable}). \nonumber
	\end{align}
	\end{subequations}
	The Algorithm \ref{algorithm:heu_per_plan} starts by solving the Linear Relaxation (LR) of $\mathscr F_3^u$. Specifically, it relaxes the variables
	$a_{km}$ to be fractional, rather than integer. The
	Linear Relaxation of $\mathscr F_3^u$ can be expressed as follows:
			\begin{align}
			\nonumber \mathscr F_4^u: & \min_{y,\{\gamma_m\}, \{a_{km}\}} y,  \\
			\mathrm{s.t.}~& (\ref{eqn:aux_cons:inc:edge_lat})-(\ref{eqn:aux_cons:inc:non_ina}), (\ref{eqn:constraint:assignment}),~\textrm{and}~ a_{km} \in [0,1].
			\end{align}
	
	Let $z^{\dagger}= [{\tilde{\mathbf{a}}}^{\dagger},y^{\dagger}, \gamma_1^{\dagger}, \ldots,\gamma_M^{\dagger}]$ denote the optimal solution of $\mathscr F_4^u$.
	We first transform ${\tilde{\mathbf{a}}}^{\dagger}$ to the equivalent fractional matrix ${\mathbf{A}}^{\dagger}$,
	whose elements are in $[0, 1]$. If all components
	of ${\mathbf{A}}^{\dagger}$ are binary, it is also the optimal solution to $\mathscr F_2^u$. Otherwise,
	to recover binary characteristic of ${\mathbf{A}}$, for each row of ${\mathbf{A}}^{\dagger}$,
	we randomly round the element $a_{km}$ to 1 with probability ${a_{km}}^{\dagger}$. The decision is done in an exclusive manner for satisfying constraint (\ref{eqn:constraint:assignment}). It means that for each row $k$, only one element of the row is one, the rest are zeros.  For example, let us consider $2$ edge nodes and at row $k$,  assume $a_{k0} = 0.7$, $a_{k1} = 0.1$ and $a_{k2} = 0.2$. We construct $3$ intervals, $0: [0, 0.7], 1: (0.7, 0.8]$ and $2: (0.8,1]$. We then randomly pick a number uniformly distributed in $[0,1]$. If the value is in interval $1$, we set $a_{k1} =1$, and the rest are zeros. The random decision is made independently for each $k$. Following
	this procedure, we obtain the solution ${\mathbf{A}}^{(\rm{Alg})}$. Then, ${r_{1,0} }^{(\rm{Alg})} = \ldots = {r_{|\mathcal{K}_0|,0} }^{(\rm{Alg})} = \frac{W^u}{|\mathcal{K}_0|}$ and ${{r_{1m}} }^{(\rm{Alg})} = \ldots = {r_{|\mathcal{K}_m|m} }^{(\rm{Alg})} = \frac{B^{\mathrm{fr}}_m}{|\mathcal{K}_m|}, \forall m$. The complexity of this algorithm is $O(\nu^{3.5}\Omega^2)$, where $\nu = K(M+1)+1$,  {$\Omega$ is the number of the bits in the input \cite{nesterov1994interior}.}  {Here, we provide the guarantee on the quality of the aggregation latency returned by Algorithm \ref{algorithm:heu_per_plan}.}
	
	\begin{algorithm}    [t]                
		\caption{Randomized Routing Algorithm for Low Latency Federated Learning}          
		\label{algorithm:heu_per_plan}                           
		{\footnotesize		
			\begin{algorithmic}[1]                    
				\Require $D$, $B^{\mathrm{fr}}_m$, $B^{\mathrm{bk}}_m$, $W^u$ and $\mathcal{K}^t$.
				\Ensure ${\mathbf{A} }^{(\rm{Alg})}$, ${\mathbf{R} }^{(\rm{Alg})}$
				\State Solve $\mathscr F_4^u$  to achieve ${\mathbf{A}}^{\dagger}$.
				\If  {${\mathbf{A}}^{\dagger}$ is binary}
				\State ${\mathbf{A} }^{(\rm{Alg})} = {\mathbf{A}}^{\dagger}$
				\Else
				\For {$k = 1$ to $k= K$}
				\State {${a_{km}}^{(\rm{Alg})} = 1$ with probability ${a_{km}}^{\dagger}$} with exclusive manner based on constraint (\ref{eqn:constraint:assignment})
				\EndFor
				\EndIf						
				
				\State {Then,
					\begin{align}
					\nonumber	& {r_{1,0} }^{(\rm{Alg})} = \ldots = {r_{|\mathcal{K}_0|,0} }^{(\rm{Alg})} = \frac{W^u}{|\mathcal{K}_0|}\\
					\nonumber 	&	{{r_{1m}} }^{(\rm{Alg})} = \ldots = {r_{|\mathcal{K}_m|m} }^{(\rm{Alg})} = \frac{B^{\mathrm{fr}}_m}{|\mathcal{K}_m|}, \forall m ~\textrm{such that}~|\mathcal{K}_m| \neq 0.
					\end{align}
				}
			\end{algorithmic}}
		\end{algorithm}
	
		\begin{theorem}
			The aggregation latency returned by Algorithm \ref{algorithm:heu_per_plan} is at most $\frac{2 \ln K}{y^{\dagger}} + 3$ times higher than the optimal with
			high probability  {$1 - 1/K$}, under the assumption $y^{\dagger} > \ln (K)$,  {where $y^{\dagger}$ is the optimal value of the $\mathscr F_4^u$ and also the lower bound of the optimal, and $K$ is the number of users.}
		\end{theorem}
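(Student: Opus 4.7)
The plan is to analyze Algorithm~\ref{algorithm:heu_per_plan} through multiplicative Chernoff concentration of the integer per-node load, combine the per-node bounds by a union bound over the $M+1$ nodes, and then translate the concentrated loads back into a latency bound using the LP constraints of $\mathscr F_4^u$.

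First I would fix notation and record the lower-bound step. Because $\mathscr F_4^u$ is the linear relaxation of $\mathscr F_3^u$ (equivalent to $\mathscr F_2^u$), the LP optimum $y^\dagger$ lower bounds the true optimum $\mathrm{OPT}$ of the integer problem. For each $m \in \{0,1,\ldots,M\}$, define $X_m := \sum_{k} a_{km}^{(\mathrm{Alg})}$ and $\mu_m := \mathbb{E}[X_m] = \sum_{k} a_{km}^{\dagger}$. Since the randomized rounding in Algorithm~\ref{algorithm:heu_per_plan} is performed independently across rows $k$ (only the within-row choice is exclusive, to enforce (\ref{eqn:constraint:assignment})), for fixed $m$ the indicators $\{a_{km}^{(\mathrm{Alg})}\}_k$ are independent Bernoulli$(a_{km}^{\dagger})$ variables, so $X_m$ is a sum of independent Bernoullis with mean $\mu_m$.

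The core probabilistic step is a multiplicative Chernoff bound: $\Pr[X_m \ge (1+\delta)\mu_m]\le \exp(-\delta^2\mu_m/3)$ for $\delta\in(0,1)$ and $\Pr[X_m \ge (1+\delta)\mu_m]\le \exp(-\delta\mu_m/3)$ for $\delta\ge 1$, together with the Poisson-tail form $\Pr[X_m \ge t]\le (e\mu_m/t)^t$ valid when $t\ge 2e\mu_m$. I would calibrate the deviation so that $\Pr[X_m > 3\mu_m + 2\ln K] \le 1/K^2$, handling the regime $\mu_m \ge \ln K$ via the first inequality and $\mu_m < \ln K$ via the Poisson-tail form. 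A union bound across the $M+1\le K$ nodes then gives the event $\mathcal{E} := \{\forall m:\; X_m \le 3\mu_m + 2\ln K\}$ with probability at least $1-1/K$.

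The final step is the translation to latency. Constraints (\ref{eqn:aux_cons:inc:edge_lat})--(\ref{eqn:aux_cons:inc:cloud_lat}) give $D\mu_m/B_m^{\mathrm{fr}} \le y^\dagger$ and $D\mu_0/W^u \le y^\dagger$, while (\ref{eqn:aux_cons:inc:ina}) and (\ref{eqn:revised_edge_cloud_latency}) bound the backhaul contribution by $\gamma_m^\dagger\le y^\dagger$. On $\mathcal{E}$, each forward term becomes $D X_m/B_m^{\mathrm{fr}} \le 3\,D\mu_m/B_m^{\mathrm{fr}} + 2\ln K\cdot D/B_m^{\mathrm{fr}}$; the first piece is at most $3y^\dagger$, and for the second piece one uses $D/B_m^{\mathrm{fr}}\le y^\dagger/\mu_m$ together with the observation that in the small-$\mu_m$ regime $X_m=0$ with overwhelming probability, so only nodes with $\mu_m$ comparable to $\ln K$ or larger contribute. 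Adding the backhaul term and comparing with the cloud uplink term via the same argument yields $y^{(\mathrm{Alg})} \le 3y^\dagger + 2\ln K = \bigl(3 + 2\ln K/y^\dagger\bigr)\,y^\dagger$ on $\mathcal{E}$, which is the claimed bound against $\mathrm{OPT}\ge y^\dagger$.

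The main obstacle will be the uniform handling of small- and large-$\mu_m$ nodes in the second piece of the translation: since the multiplicative Chernoff bound degrades when $\mu_m\to 0$ while $D/B_m^{\mathrm{fr}}$ can grow like $y^\dagger/\mu_m$, one must carefully switch to the Poisson-tail estimate in the low-mean regime and absorb the residual contribution into the additive $2\ln K$ term. The hypothesis $y^\dagger>\ln K$ is precisely what is needed to keep the dominant-node scales and the $\ln K$ union-bound penalty commensurate, so that the stated multiplicative form $3 + 2\ln K/y^\dagger$ follows cleanly.
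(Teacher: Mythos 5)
Your proposal follows essentially the same route as the paper's proof: a multiplicative Chernoff bound on each node's rounded load, calibrated so the per-node failure probability is $1/K^2$, followed by a union bound over the $M+1\le K$ nodes and the observation that the LP optimum $y^{\dagger}$ lower-bounds the true optimum. The paper reaches the constant more directly by applying the Bernstein-form bound $\mathrm{Pr}\big[T^{\mathrm{u}}_m>(1+\delta)y^{\dagger}\big]\le e^{-\delta^2 y^{\dagger}/(2+\delta)}$ and choosing $\delta=\frac{2\ln K}{y^{\dagger}}+2$ (which is where the hypothesis $y^{\dagger}>\ln K$ enters), so your two-regime Poisson-tail case analysis and the delicate low-mean translation step are not needed; the scale mismatch you flag as the ``main obstacle'' is present but implicitly normalized away in the paper's argument as well.
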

		\begin{proof}
			Let $y^{\dagger}$ denote the optimal value of the $\mathscr F_4^u$ and $U = 6 \ln(K)y^{\dagger}$. For $\delta >0$, applying the Chernoff bound, we have
			\begin{align}
			\mathrm{Pr} \Bigg [T^{\mathrm{u}}_m   > (1+ \delta) y^{\dagger} \Bigg] & \leq e^{-\frac{\delta^2y^{ \dagger}}{2+ \delta}} .
			\end{align}
			Next, we upper bound the right hand side of the above
			inequality by $1/K^2$. In order  {to achieve this condition,} the $\delta$ value
			must satisfy the following condition:
			\begin{align}
			\delta \geq \frac{\ln (K)}{y^{ \dagger}} + \sqrt{\frac{\ln^2(K)}{y^{ \dagger 2}}+\frac{4\ln(K)}{y^{ \dagger}}},
			\end{align}
			with the assumption $y^{\dagger} > \ln (K)$. The above condition holds if we pick $\delta = \frac{2\ln (K)}{y^{ \dagger}} +2$. Then, by applying the union bound, we get
			\begin{align}
\nonumber			\mathrm{Pr}[ {\exists m |} T^{\mathrm{u}}_m >(1+ \delta) y^{\dagger}]	& \leq \sum \mathrm{Pr}[T^{\mathrm{u}}_m >(1+ \delta) y^{\dagger} ]   \\
			& \leq \frac{M+1}{K^2} \leq \frac{1}{K}.
			\end{align}
							\begin{figure}[t]
					\centering
					{\includegraphics[width=\linewidth]{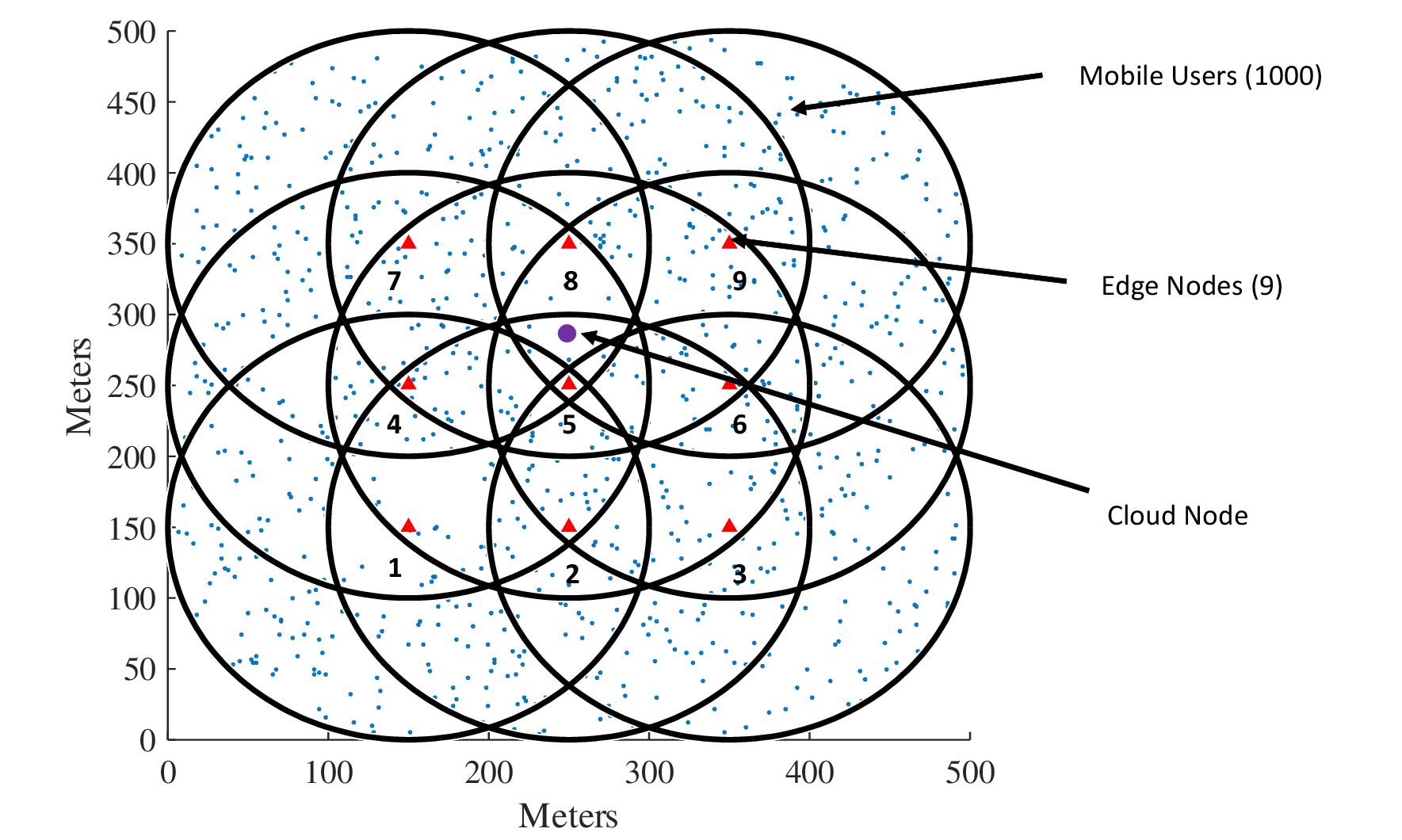}}
					\caption{The network setup. \label{fig::network_setup} }
				\end{figure}
			Consequently, with high probability  {$1 - \frac{1}{K}$}, the resulting aggregation latency is at most $1 + \delta = \frac{2 \ln K}{y^{\dagger}} + 3$ times worse
			than that of the optimal solution.
		\end{proof}
		 {In practice, the number of users $K$ is large, for example $5000$. It is reasonable to assume that the lower bound of latency is higher than $\ln K = 8.5$ s. In the simulations below, we observe that the assumption holds for practical settings.}
		\subsection{Without In-Network Computation Protocol}
		In this section, we  {aim} to find the lower bound of network latency when the proposed in-network computation protocol is not considered. First, we introduce an auxiliary variable $y$ such that
		\begin{align}
\nonumber		y \geq \max \Bigg \{ & \max_m \Bigg\{D \frac{ \sum_{k \in \mathcal{K}_m}  a_{km}}{B^{\mathrm{fr}}_m} + D\frac{\sum_{k \in \mathcal{K}_m}  a_{km}}{B^{\mathrm{bk}}_m} \Bigg\}, \\
		& D\frac{ \sum_{k \in \mathcal{K}_0}  a_{k0} }{W^u}  \Bigg \}.
		\end{align}
		Hence, problem $\mathscr Q_2^u$ is equivalently transformed to
		\begin{subequations}
					\begin{align}
					\nonumber \mathscr Q_3^u: & \min_{y, \{a_{km}\}} y,  \\
					\mathrm{s.t.}~& y \geq D\frac{ \sum_{k \in \mathcal{K}_m}  a_{km}}{B^{\mathrm{fr}}_m} + D\frac{\sum_{k \in \mathcal{K}_m}  a_{km}}{B^{\mathrm{bk}}_m}, \forall m, \label{eqn:aux_cons:non_inc:edge_lat}\\
					& y \geq D\frac{ \sum_{k \in \mathcal{K}_0}  a_{k0} }{W^u},\label{eqn:aux_cons:non_inc:cloud_lat}\\
					& (\ref{eqn:constraint:assignment})~ \textrm{and} ~(\ref{eqn:constraint:assign_variable}). \nonumber
					\end{align}
		\end{subequations}
	Like Algorithm \ref{algorithm:heu_per_plan}, we relax the variables
		$a_{km}$ to be fractional, rather than integer. The
		Linear Relaxation formulation of $\nonumber \mathscr Q_3^u$ can be expressed as follows:
			\begin{align}
			\nonumber \mathscr Q_4^u: 	& \min_{y,\{a_{km}\}} y , \\
			\mathrm{s.t.}~& (\ref{eqn:aux_cons:non_inc:edge_lat}), (\ref{eqn:aux_cons:non_inc:cloud_lat}),(\ref{eqn:constraint:assignment}),
			~\text{and}~ a_{km} \in [0,1].
			\end{align}
		Let $y^{\dagger \dagger}$ denote the optimal solution of $\mathscr Q_4^u$. $y^{\dagger \dagger}$ is the lower bound of the uplink latency without the proposed in-network computation protocol.  {We use $y^{\dagger \dagger}$ to proxy the aggregation latency when INA is not considered.}

		\section{Numerical Results} \label{section:numerical}

		In this section, we carry out  {simulations} to evaluate the
		performance of the proposed frameworks. We consider a
		similar setup as in \cite{PoularakisTON2020}, depicted in  {Fig.} \ref{fig::network_setup}.
		Here, $M = 9$ edge nodes are regularly deployed on a
		grid network inside a $500 \times 500 ~\rm{m}^2$ area. $K = 1000$ mobile
		users are uniformly distributed over the edge nodes' coverage regions (each of $150$m radius). All mobile users are within the coverage of the cloud node. For each edge node $m$, we set the uplink
		fronthaul capacity to $B^{\mathrm{fr}}_m = 1$Gbps, the backhaul capacity to $B^{\mathrm{bk}}_m = 1$Gbps. These settings are inspired by WiFi IEEE 802.11ac standards \cite{Wifiax}, and data centers interconnection using optical fibers \cite{Chengthesis2019}. We also set the cloud uplink and downlink capacities $W^u = W^d = 2$Gbps.  {As in \cite{TranINFOCOM2019}, we set $t_{\min}^{\rm{cp}}$ and $t_{\max}^{\rm{cp}}$ at $0.2$s and $80$s, respectively.} For model aggregation, by default, we investigate our system using ResNet152's model size, i.e., $D = 232$ MB \cite{KerasPretrain}. In later simulations, we also investigate our system with different model sizes.
		
					\begin{table}[t] \footnotesize			
						\caption{Default Parameter Setup.} \label{tab:para_setup_2}
						\begin{center}
							\begin{tabular}{|c|c| }
								\hline
								{\bf Parameter} & {}{\bf Value} \\  \hline
								$M$ & $9$  \\  
								$K$ & $1000$   \\  
								EN's coverage & $150$ m \\  
								EN to EN distance & $100$ m \\  
								$B^{\mathrm{fr}}_m$ & $1$Gbps \\  
								$B^{\mathrm{bk}}_m$ & $1$Gbps\\  
								$W^u$ & $2$ Gbps \\
									$ W^d$ & $2$ Gbps \\
								$t_{\min}^{\rm{cp}}, t_{\max}^{\rm{cp}} $& $0.2$ s, $8$s \\  
									$ D$ & $232$ MB \\
								\hline	
							\end{tabular}
						\end{center}
					\end{table}%
		\subsection{Algorithm Comparison - Latency Reduction}
				\begin{figure}[t]
					\centering
					{\includegraphics[width=0.8\linewidth]{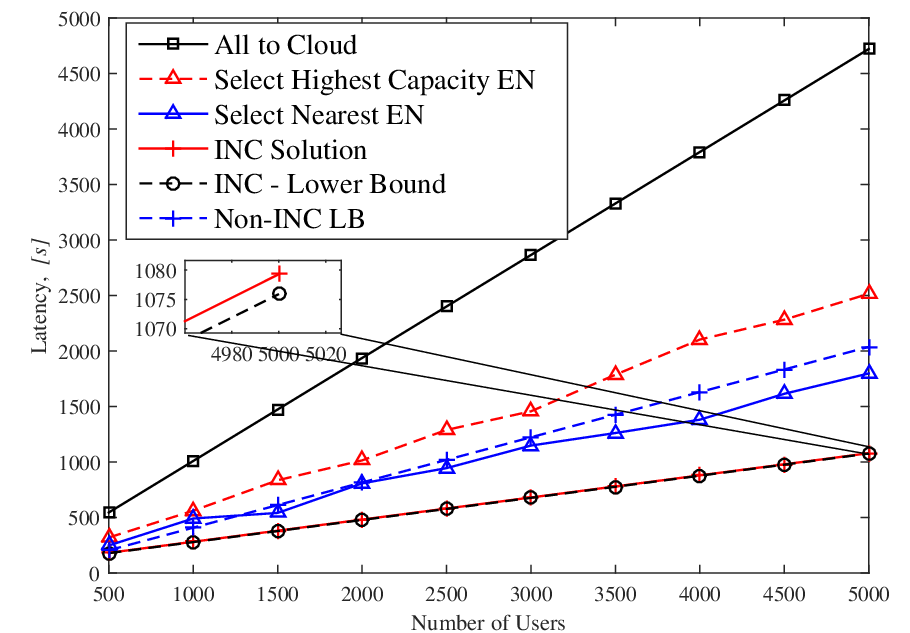}}
					\caption{Algorithm comparison with respect to different number of users. \label{fig::alg_com_change_K} }
				\end{figure}
		Fig. \ref{fig::alg_com_change_K} compares the one-iteration  {training time} of different algorithms
		versus the number of users $K$. The proposed INC protocol is
		compared with three other baseline methods, namely:
		\begin{itemize}
			\item[1.] ``Only Cloud'': All $K$ users' models are directly aggregated via their direct link to the cloud. Here, models are not forwarded to any edge nodes (ENs).
			\item[2.] ``Select Highest Capacity EN'': In this greedy-manner framework, each user selects the neighbor EN with the highest uplink data rate capacity. 
			\item[3.] ``Select Nearest EN'': In this  greedy-manner  framework, each user selects the nearest neighbor EN. 
			\item[4.] ``INC Solution'': $K$ users can associate with the cloud node and edge nodes with INC protocol.  The network routing problem $\mathscr F_3^u$ is solved by using Algorithm \ref{algorithm:heu_per_plan}.
			\item[5.] ``INC-Lower Bound'': In this scenario, we use Linear
			Relaxation to solve $\mathscr F_3^u$. This scenario will provide the lower bound of network latency if the proposed INC protocol is considered.
			\item[6.] ``Non-INC LB'': $K$ users can associate with the cloud node and edge nodes without the INC protocol. By using Linear
			Relaxation to solve $\mathscr Q_3^u$, this method will provide the lower bound of network latency if INA process is not implemented at edge nodes and the cloud node.
		\end{itemize}
		
		Here, users are scheduled by following Algorithm \ref{algorithm:user_scheduling}. As can be  {observed in} Fig. \ref{fig::alg_com_change_K}, our proposed algorithm can achieve near optimal performance. When $K = 5000$, the
		latency obtained by the proposed solution is approximately $0.7\%$ higher than that of the \textit{INC LB}. It implies that our
		proposed solution can achieve the performance almost the
		same as that of the lower bound solution even with a high number of users. \textit{Only Cloud} has the worst performance. For example, when $K = 5000$, the network latency of \textit{Only Cloud} is $4721$s which is  {is $1.8$ times higher} than that of the second worst one, \textit{Select Highest Capacity EN}, $2516$s. The second framework, ``Select Highest Capacity EN'', provides the second lowest performance due to resource contention. Since users select the highest EN with the highest capacity EN of their neighbor EN set, user locating in the overlapping region of the same set of ENs will select the same EN. As a result, there is a resource contention in the chosen EN. The other greedy-manner, ``Select Nearest EN'', can achieve better performance. Because users' positions are randomly distributed, the number of users in each EN's coverage is approximately the same. Thus, this framework can achieve an approximated load-balancing solution which is significantly better than the second framework. Last but not least, the  {gaps} between our proposed algorithm with other baselines increase as the number of users $K$ increases.  {This clearly shows} that our  {proposed solution} is  {significantly beneficial} for very large scale federated learning networks.
		
		Since our algorithm requires to compute the Linear Relaxation results before conducting randomization, its time complexity is higher than those of the baseline methods. Table \ref{tab:algo_complex} summarizes the time complexity of the four frameworks.
		\begin{table}[t] \footnotesize			
			\caption{Time Complexity.} \label{tab:algo_complex}
			\begin{center}
				\begin{tabular}{|c|c| }
					\hline
					{\bf Frameworks} & {}{\bf Complexity} \\  \hline
					``All to Cloud'' & $O(1)$  \\  
					``Select Highest Capacity EN'' & $O(KM)$   \\  
					``Select Nearest EN'' & $O(KM)$ \\  
					``INC Solution'' &$O(\nu^{3.5}\Omega^2)$\\  
					\hline	
				\end{tabular}
			\end{center}
		\end{table}%

		\begin{figure}[t]
			\centering
			{\includegraphics[width=0.9\linewidth]{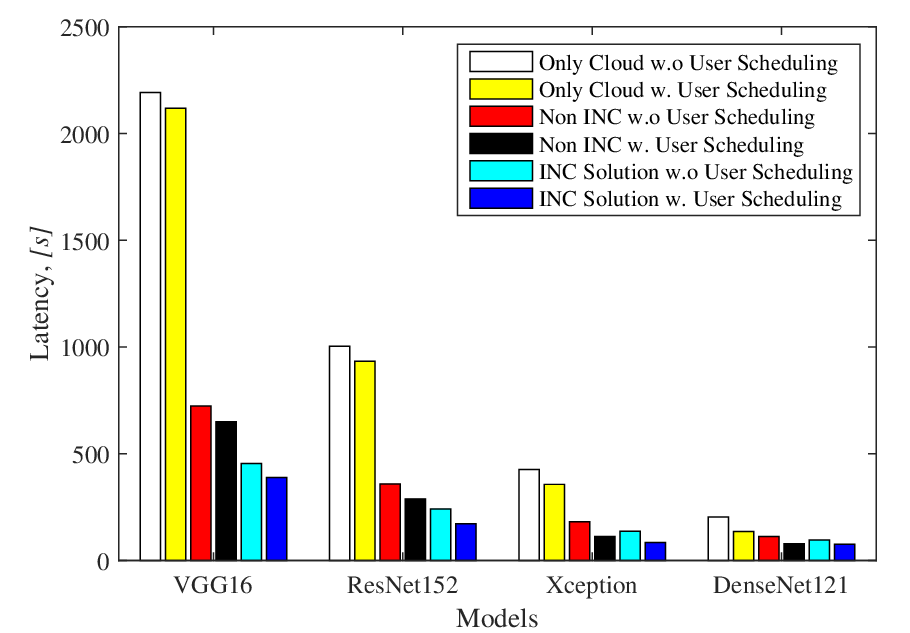}}
			\caption{Algorithm comparison with respect to different models. \label{fig::alg_com_change_D} }
		\end{figure}

		 In Fig. \ref{fig::alg_com_change_D}, consider \textit{Only Cloud}, \textit{Non-INC LB} and  our \textit{INC solution}, we evaluate the network latency with or without user scheduling mechanism in different models. In the case when the user scheduling mechanism is not considered, all users need to wait until the slowest user finishing it local processing step. They are VGG16, ResNet152, Xception and DenseNet121 whose model sizes are $528$MB, $232$MB, $88$MB and $33$MB, respectively. The distribution of $p(t^{\rm{cp}})$ is remained unchanged since we want to focus only on the variation of aggregation latency as $D$ changes. Here, we choose the default setting  {with} $K=1000$. We observe that our user scheduling mechanism can mitigate straggler effects due to slow workers. For example, when we consider ResNet152 model, \textit{INC solution} saves $28.49\%$ in comparison with the case where user scheduling mechanism is not considered. Moreover, since the aggregation latency decreases as the model's size $D$ decreases, we observe that the latency saving of \textit{INC solution w.o User Scheduling} to \textit{INC solution w. User Scheduling} increases then decreases. For example, the saving is $14.63\%$, $28.49\%$, $38.18$ and $20.43\%$ for VGG16, ResNet152, Xception and DenseNet121, respectively. It implies that as the communication latency decreases, the contribution of the slowest node's computing delay increases. Until a certain value $D$,  the slowest node's computing delay, which is fixed in this simulation, becomes the major part of the whole network delays. Hence, the saving decreases again.
		\subsection{Impacts of $\Delta t$}
		In this subsection, we investigate the impacts of $\Delta t$ on the latency of one learning round. Here, we used the default setting with number of users $K = 1000$. As we can see in the above figure, as $\Delta t$ increases, the number of users in $P_1$ increases because all users with $t_{k}^{\rm{cp}} \leq t_{\min}^{\rm{cp}}+\Delta t$ are assigned to $P_1$. It is also the reason why the number  of users in $P_2$ decreases. As can be seen in Fig. \ref{fig::change_delta_t}, we observe that with small $\Delta t$, most of users wait  until $ t_{\max}^{\rm{cp}}$ to be collected in partition $P_2$. Thus, we can see high latency with small $\Delta t$. As $\Delta t$ increases, more users go to $P_1$, thus the latency decreases. When $\Delta t$ is big enough, the latency rises again because most of users are in $P_1$ and their local models have to wait after $t_{\min}^{\rm{cp}}+\Delta t$ to be collected. When $\Delta t =t_{\max}^{\rm{cp}} = 80$s, the latency is highest since all users' models start to be collected after $t_{\max}^{\rm{cp}}$.
			\begin{figure}[t]
				\centering
				{\includegraphics[width=\linewidth]{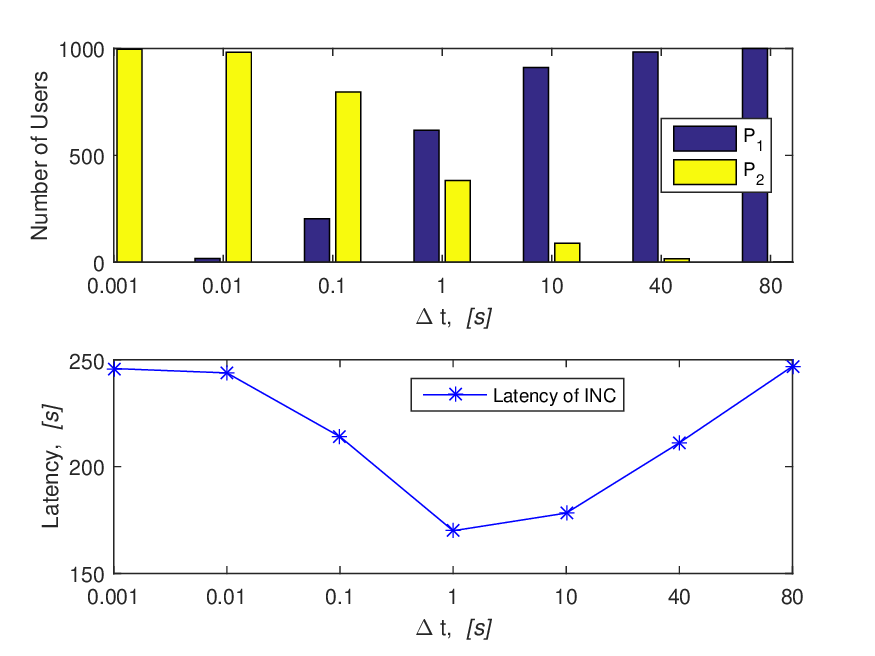}}
				\caption{Latency of one learning round and the number of users in each partition change w.r.t $\Delta t$. \label{fig::change_delta_t} }
			\end{figure} 
		\subsection{Traffic and Computation Reduction at the Cloud Node}
		
		\begin{figure}[h]
			\centering
			\subfigure[]{\includegraphics[width=0.8\linewidth]{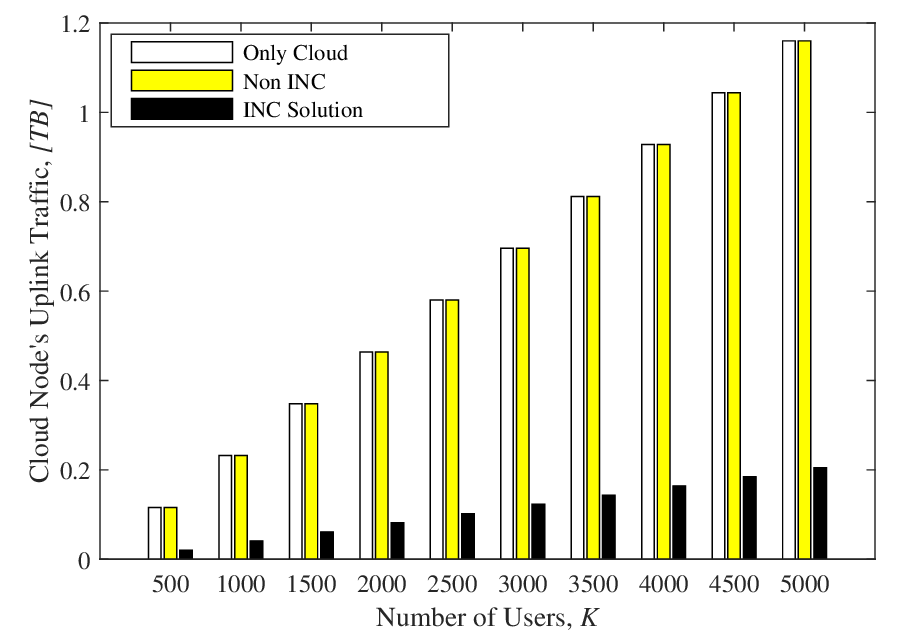}}
			\subfigure[]{\includegraphics[width=0.8\linewidth]{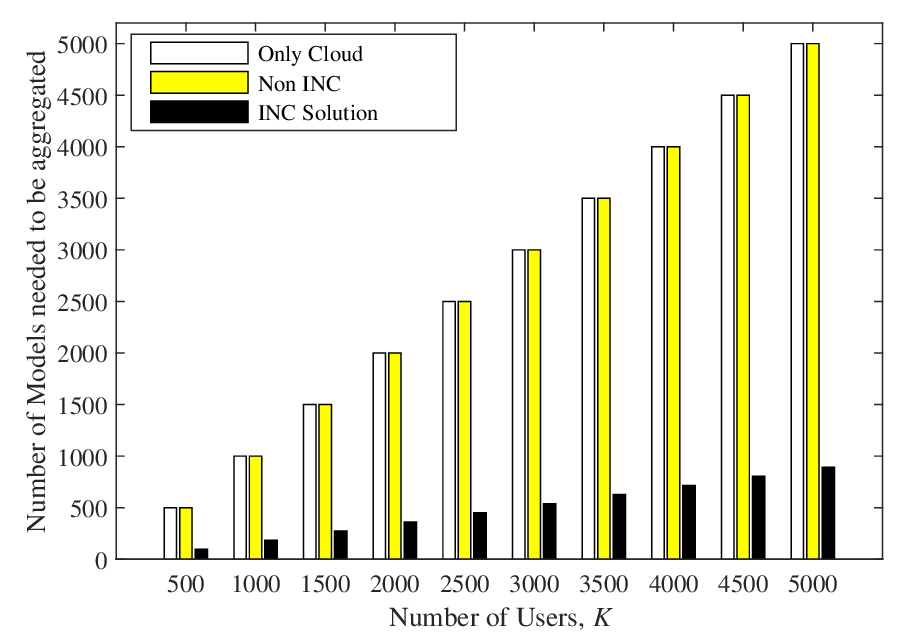}}
			\caption{Cloud node's uplink traffic and computing load. \label{fig::network_load} }
		\end{figure}
		In this subsection, using ResNet152's model setting, we investigate the uplink traffic and the number of models needed to be aggregated at the cloud node in one learning iteration. We compare three schemes: \textit{Only Cloud}, \textit{Non-INC LB} and  our \textit{INC solution}. The number of models needed to be aggregated at the cloud node is proportional to the number of computations here. In Figs. \ref{fig::network_load}, the uplink traffic and the number of computations of \textit{Non-INC LB} at the cloud node is equal to those of \textit{Only Cloud}. It is because all models need to be sent to the cloud before being aggregated and edge nodes only  {forward} the models from users to the cloud without INA process. Meanwhile, with \textit{INC solution}, the two metrics are significantly reduced. For example, when $K = 5000$, the traffic is $0.2$TB for our scheme and $1.16$TB for the two others. Our scheme achieves more than $5$ times lower traffic than the others.  In this simulation, with ResNet152, the number of parameters of each local models is $60,419,944$. As a result, without the proposed INA process, the cloud node needs to aggregate $K$ models whose sizes are more $60$ millions elements. This will  {consume a huge amount of} processing and memory resources. As can be seen, the \textit{INC solution} can reduce the number of models needed to be aggregated at the cloud by more than $5$ times.

		\subsection{Impacts of the number of additional edge nodes' connectivities}

		In this subsection, we investigate the impacts of the number extra edge nodes users on the straggler effects due to bad communication links. 
In this simulation, our system suffers straggler effects when a user's  all wireless connections are bad. As a result, its model is not able to be aggregated at the cloud node. Without loss of generality, we assume that the networking components of servers, edge nodes or cloud node, have probabilities of being faulted. We define $p_{\rm{cloud}}$ as the outage probability of the cloud. Similarly, we define $p_{\rm{edge}}$  as the outage probability of a given edge node (here we assume all edge nodes have the same outage probability). Thus, the probability that the system suffers the straggler effect ${P_s}$ is computed as
		\begin{align}
		\mathrm{P_s} =  p_{\rm{cloud}}p_{\rm{edge}}^{v},
		\end{align}
		where $v$ is the number of extra edge nodes users can connect.
		As can be  {observed in} Fig. \ref{fig::straggle}, increasing the number of additional edge connectivities  {can significantly mitigate} straggler effects. With $p_{\rm{cloud}} = 0.3$, we can decrease the straggle effects $4$ times and $12$ times by providing two additional edge connections for users, with the well-being probabilities of edge node's networking component are $0.5$ and $0.7$, respectively.
		\begin{figure}[t]
			\centering
			{\includegraphics[width=0.8\linewidth]{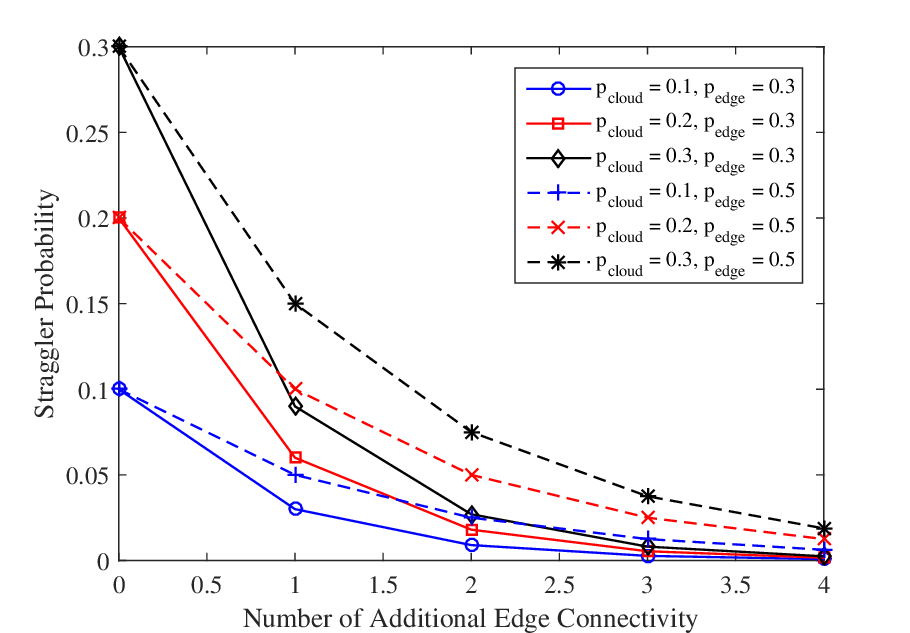}}
			\caption{Straggler effect's probability. \label{fig::straggle} }
		\end{figure}		
		

		\section{Conclusion and Future Directions} \label{section:conclusion}
		In this article, we proposed a  {novel} edge network architecture to decentralize the communications and computing burden of cloud node in Federated Learning. To that end, we designed an in-network computation protocol (INC)  consisting of a user scheduling mechanism, an in-network aggregation process (INA), and an routing algorithm. The in-network aggregation process, which is implemented at edge nodes and cloud node, can adapt two typical methods to solve  the distributed machine learning problems. Under the proposed in-network aggregation (INA) framework, we then formulated a joint routing and resource optimization problem, aiming to minimize the aggregation latency. The problem is proved to be NP-Hard. We then derived its near-optimal solution using random rounding with proven performance guarantee. Simulation results {showed} that the proposed algorithm  {can achieve} more than 99 $\%$ of the optimal solution and significantly outperforms all other baseline schemes without INA. The proposed scheme becomes even more effective (in reducing the latency and straggler effects) when more edge nodes are available. Moreover, we also showed that  {the INA} framework not only help reduce training latency in FL but also reduce significantly reduce the traffic load to the cloud node. By embedding the computing/aggregation tasks at edge nodes and leveraging the multi-layer edge-network architecture, the INA framework can enable large-scale FL.


		\ifCLASSOPTIONcaptionsoff
		\newpage
		\fi

		\bibliographystyle{IEEEtran}
		\bibliography{IEEEabrv,references}

	\end{document}